\numberwithin{equation}{section}	
\title{Polaron models with regular interactions at strong coupling}
\author{Krzysztof My\'{s}liwy\footnote{krzysztof.mysliwy@ist.ac.at}\ \ and Robert Seiringer\footnote{robert.seiringer@ist.ac.at}} 
\affil{\textit{IST Austria, Am Campus 1, 3400 Klosterneuburg, Austria}}
\newtheorem{thm}{Theorem}
\newtheorem{lemma}[thm]{Lemma}
\newtheorem{conjecture}[thm]{Conjecture}
\theoremstyle{remark}
\newtheorem{rem}{Remark}[thm]
\newcommand{\comment}[1]{}
\newcommand{\mpek}{M^{\mathrm{Pek}}}
\newcommand{\dd}{d} 
\newcommand{\GG}{G_{\psi,\varphi}}
\theoremstyle{definition}
\begin{document}
\maketitle
\begin{abstract}
We study a class of polaron-type Hamiltonians with sufficiently regular form factor in the interaction term. We investigate the strong-coupling limit of the model, and prove suitable bounds on the ground state energy as a function of the total momentum of the system. These bounds agree with the semiclassical approximation to leading order. The latter corresponds here to the situation when the particle undergoes harmonic motion in a potential well whose  frequency is  determined by the corresponding Pekar functional. We show that for all such models the effective mass diverges in the strong coupling limit, in all spatial dimensions. Moreover, for the case when the phonon dispersion relation grows at least linearly with momentum, the  bounds result in an asymptotic formula for the effective mass quotient, a quantity generalizing the usual notion of the effective mass. This asymptotic form agrees with the semiclassical Landau--Pekar formula and can be regarded as the first rigorous confirmation, in a slightly weaker sense than usually considered, of the validity of the semiclassical formula for the effective mass.
\end{abstract}
\section{Introduction and main results}

\subsection{The model}The polaron problem concerns the motion of a quantum particle of mass $m$ exchanging energy and momentum with a large environment modeled by a  bosonic field. The model has a long history tracing back to the thirties \cite{Lan1,Fro1,Pek,Fey1} but due to its basic character it remains a model of reference in many problems, and is still under active investigation in condensed matter physics; we refer to \cite{Dev, Grusdt} for an overview and further references. The models under study here are defined by the Hamiltonian
\begin{equation}\label{ham}
\mathbb{H}=  -\frac{1}{2m}\Delta_x+ \int_{\mathbb{R}^d} \epsilon(k)a^{\dagger}_k a_k dk +\sqrt{\alpha} \int_{\mathbb{R}^d} \left (v(k) a_k e^{ik\cdot x} +\overline{v(k)}a^{\dagger}_k e^{-ik\cdot x}\right) dk.
\end{equation} This operator acts on $L^2(\mathbb{R}^d)\otimes \mathcal{F}$ with $\mathcal{F}$ the bosonic Fock space over $L^2(\mathbb{R}^d)$, and with $a_k, a^{\dagger}_k$ the usual annihilation and creation operators.  The phonon {\em dispersion relation} $\epsilon$ is a positive function, $v$ quantifies the interaction of the particle with the  field modes and is referred to as the \emph{form factor}, and $\alpha$ is a positive coupling constant, traditionally appearing in \eqref{ham} under the square root. We assume that  $\inf_{k\in\mathbb{R}^d} \epsilon(k) >0$ and $v \in L^2(\mathbb{R}^d)$, in which case \eqref{ham} is well-defined as a self-adjoint operator on the intersection of the domains of $\Delta_x$ and the field energy  $\int \epsilon(k) a^\dagger_k a_k dk$, respectively.  Moreover, 
 we can then readily define two functions naturally related to this Hamiltonian: the \emph{Pekar kernel}
\begin{equation}\label{def:h}
h(x):=\frac{1}{(2\pi)^{d/2}}\int_{\mathbb{R}^d}  \frac{v(k)}{\sqrt{\epsilon(k)}}e^{ik\cdot x}dk
\end{equation} and the position space potential 
\begin{equation}
\eta(x):=\frac{1}{(2\pi)^{d/2}}\int_{\mathbb{R}^d}  v(k) e^{ik\cdot x}dk.
\end{equation}   
We shall impose further regularity assumptions on $h$ and $\eta$, namely that $h$ is in the Sobolev space $W^{2,2}(\mathbb{R}^d)$ and that $\eta$ is in $W^{1,2}(\mathbb{R}^d)$. Equivalently, the functions $k\mapsto v(k)(1+k^2)\epsilon(k)^{-1/2}$ and $k\mapsto v(k) (1+k^2)^{1/2}$ are in $L^2(\mathbb{R}^d)$. 
For simplicity, we shall also assume that  the form factor and the dispersion relation depend on $|k|$ only, and that the latter is a continuous function of $|k|$. If all these conditions are satisfied, we call $\mathbb{H}$  \emph{regular}. 

Our main interest lies in the strong-coupling limit of very large $\alpha$, and its connection to the semiclassical limit described below. This problem has been studied in the mathematical physics literature \cite{LT,DV,FrankSe} in the special case of the {\em Fr\"ohlich model} corresponding to $d=3$,  $v(k)=(\sqrt{2}\pi|k|)^{-1}$ and $\epsilon(k)=1$ in appropriate units. It corresponds to the original polaron problem 
addressing the  important problem of electronic conductivity in ionic crystals.
Our goal here is to analyze the strong-coupling limit in the regular case, where, on the one hand, one does need to worry about the UV divergences as in the Fr\"ohlich model, but at the same time the useful scaling properties found therein are lost.  We believe that performing the strong-coupling analysis for polaron models other than the original Fr\"ohlich Hamiltonian may be of relevance as various versions of the polaron problem, with more general choices of the form factor and the dispersion relation, are being considered in the literature, mostly in the context of the physics of cold atoms, e.g. in the Bose polaron model and its analog, the angulon model \cite{Grusdt, Tempere, MiszaEnderalp, Timour, PenaArdila}. The rigorous results obtained, even if proved for simplified versions of the problem, may be practically useful e.g. as a reference point for numerical calculations. At the same time, the regularity enables us to prove new results concerning the validity of the semiclassical approximation to the effective mass, which constitutes an outstanding open problem. Our result on the effective mass is applicable in the case of a dispersion relation growing at least linearly in $|k|$ as in the case of the Bose polaron, thus excluding the Fr\"ohlich polaron, although we expect that our methods can serve as a starting point in future investigations on this problem also in this case. 

\subsection{Basic considerations and definitions}

Because of translation invariance, the  Hamiltonian \eqref{ham} commutes with the total momentum 
\begin{equation}
-i \nabla_x +\underbrace{\int k\, a_k^{\dagger}a_k \, dk}_{=:P_f}
\end{equation} and it can be cast, using a transformation due to Lee, Low and Pines \cite{LLP}, in the unitarily equivalent form 

\begin{equation}
\frac{1}{2m}\left(-i \nabla_x -P_f\right)^2+\mathbb{F}+\sqrt{\alpha}\mathbb{V}
\end{equation}
where $\mathbb{F}=\int \epsilon(k)a^{\dagger}_k a_k dk$ and $\mathbb{V}= \int  (v(k) a_k  +\overline{v(k)}a^{\dagger}_k) dk$. This can be easily diagonalized in the $L^2$ part of the domain, so that one has the fiber decomposition $\mathbb{H}\simeq \int_{\oplus} \mathbb{H}_P dP$ with a family of Hamiltonians acting only on  Fock space
\begin{equation}\label{HP}
\mathbb{H}_P:=\frac{1}{2m}(P-P_f)^2+\mathbb{F}+\sqrt{\alpha}\mathbb{V}
\end{equation} describing the system moving with momentum $P\in \mathbb{R}^d$. 
In this work, we are concerned with the ground state energies at fixed momentum, 
\begin{equation}
E(P):=\inf \mathrm{spec} ~\mathbb{H}_P
\end{equation} and the absolute ground state energy \begin{equation}
E_0=\inf\mathrm{spec}~\mathbb{H}=\inf_P E(P).
\end{equation} 
The following terminology concerning the dispersion relation will be useful below.  
\begin{enumerate}
  \item We say that $\epsilon$ is \emph{massive} if  $\Delta:=\inf_k \epsilon(k)>0$.  
  \item $\epsilon$ is \emph{subadditive} if $\epsilon(k_1+k_2)\leq \epsilon(k_1)+\epsilon(k_2)$ for all $k_1,k_2\in \mathbb{R}^d$.
  \item Moreover, we say that $\epsilon$ is of \emph{superfluid type} if 
  \begin{equation}
   \inf_{k\in \mathbb{R}^d}\frac{\epsilon(k)}{|k|}=:c >0.
  \end{equation} The number $c$ is called the \emph{critical velocity}.
\end{enumerate}
Prime examples of the above are optical phonons (with constant dispersion relation) for a massive and subadditive field and acoustic phonons (where $\epsilon(k)$ is linear in $|k|$) for a field of superfluid type. Physically, the first case is encountered in the original Fr\"ohlich polaron model, while a superfluid-type field is found in the Bose polaron. If the dispersion relation is massive and subadditive,  $E(P)$ is an isolated, simple eigenvalue for $P^2< 2m \Delta$. Moreover, $\inf_PE(P)=E(0)$, and $E(P)$ is an analytic function close to $P=0$ \cite{Gerlach, Moeller}. The {\em effective mass} is then defined as
\begin{equation}\label{def:Meff}
  M_{\text{eff}}:=\frac{1}{2}\lim_{P\rightarrow 0 } \left(\frac{E(P)-E(0)}{P^2}\right)^{-1}.
\end{equation}
In other words, $E(P) \approx E(0) + \frac{P^2}{2 M_{\rm eff}}$ for small $P$, and the system is envisioned as behaving, for sufficiently small momenta,  like  
a free particle of mass $M_{\mathrm{eff}}$ called the polaron, whence the entire model bears its name. 
We also introduce the function 
\begin{equation}\label{def:MP}
  M(P)=\frac{1}{2} \left(\frac{E(P)-E(0)}{P^2}\right)^{-1}
\end{equation} which we call \emph{the effective mass quotient}. It is well-defined for all $P$ s.t. $E(0)\neq E(P)$, and can be viewed as a \emph{global} measure of the curvature of $E(P)$, in contrast to $M_{\text{eff}}=\lim_{P\rightarrow 0}M(P)$ which quantifies this curvature \emph{locally} at $P=0$. The validity of the polaron picture can be also expressed as a statement that $M(P)$ is asymptotically a constant function for sufficiently small momenta. We find this picture useful below, where we shall consider both the case of $P$ vanishingly small as well as admitting values from a specified range.
\subsection{Motivation and statements of the results} We shall provide bounds on the above quantities in the limit of large $\alpha$. These bounds agree with the semi-classical approximation, which we now briefly recall. To do so, let us first observe from \eqref{HP} that the presence of the  particle induces non-trivial correlations between the modes of the field; if these are ignored, the problem is easily solvable. Indeed, in the case $m=\infty$, the spectrum of $\mathbb{H}_P$  is equal to that of the operator $\mathbb{F}-\alpha\|h\|^2$, with ground state energy $-\alpha \|h\|^2$. This corresponds to a free bosonic field fluctuating on top of a classical deformation profile induced by a point impurity. 
The ground state is simply the coherent state $|\phi\rangle $ with 
\begin{equation}
a_k|\phi\rangle = -\sqrt{\alpha}\frac{\overline{v(k)}}{\epsilon(k)}|\phi\rangle \qquad \forall k \in \mathbb{R}^d.
\end{equation} 
The evaluation of $\mathbb{H}$ on pure tensor products of the form $\psi\otimes \phi,$ where $\phi$ is a coherent state, amounts to replacing the creation and annihilation operators by complex numbers, which is equivalent to treating the boson field in a classical way. In fact, as is well known (we reproduce the argument in the proof of the upper bound in Theorem \ref{thm1} below), this coherent state ansatz is optimal over all product trial states. In other words, for polaron models, \emph{the adiabatic limit} (corresponding to a product trial state) \emph{and the strong-coupling limit coincide}. The adiabatic limit  can certainly be expected to be asymptotically correct if the mass of the particle is large. At the same time, it is well known that the adiabatic limit is asymptotically correct as $\alpha\rightarrow\infty$ in the Fr\"ohlich case, indirectly through  energy estimates \cite{LT} and also as far as the dynamics is concerned \cite{Dav1, Dav2}. If we were, therefore, to assume that the same conclusion is valid in more generality, our regular case included, we expect that
\begin{equation}\label{e1}
  \lim_{\alpha\rightarrow\infty}\frac{E_0}{\alpha}=-\|h\|^2.
\end{equation} 
Moreover, one can readily postulate how the next order correction should look like: since the leading order corresponds to the picture of a classical point particle situated at the bottom of a potential well created by the phonons, the next order correction should stem from the zero-point oscillations in this well. If we replace the  annihilation operators in $\mathbb{H}$ by the numbers $-\sqrt{\alpha}\frac{\overline{v(k)}}{\epsilon(k)}$ (and  $a_k^{\dagger}$ by its complex conjugate) and expand the $e^{ik\cdot x}$ factors to second order, we arrive at the one-particle Schr\"odinger operator \begin{equation}
  -\frac{\Delta_x}{2m}+\frac{m\omega^2}{2}x^2-\alpha \|h\|^2, 
\end{equation} where 
\begin{equation}\label{omega}
  \omega=\sqrt{\frac{2\alpha}{dm}}\|\nabla h\|,
\end{equation} with well-known ground state energy $\frac{d\omega}{2}-\alpha\|h\|^2$. We hence expect the subleading term to be $d\omega/2$, and thus of order $\alpha^{1/2}$. That these considerations are correct is the content of our first theorem. \subsubsection{Ground state energy asymptotics}
\begin{thm}\label{thm1} 
 Let $\mathbb{H}$ be regular. Then we have 
 \begin{equation}\label{bd}
 \begin{split}
 &  -\alpha \|h\|^2 + \sqrt{\frac{d \alpha}{2m}}\|\nabla h\|  ~\geq ~\inf \mathrm{spec}~\mathbb{H} ~\geq \\ & \geq  ~ -\alpha \|h\|^2 + \sqrt{\frac{d \alpha}{2m}}\|\nabla h\|-\frac{d}{2} \frac{\|\nabla \eta\|^2}{\|\nabla h\|^2} -\frac{d}{8m} \frac{\|\Delta h\|^2}{\|\nabla h\|^2} .
   \end{split}
 \end{equation} 
 In particular, for $E_0=\inf \mathrm{spec}~\mathbb{H}$, \eqref{e1} holds, and 
\begin{equation}
   \lim_{\alpha\rightarrow\infty} \alpha^{-1/2} \left(E_0+\alpha \|h\|^2\right)=\sqrt{\frac{d}{2m}}\|\nabla h\|.
 \end{equation}
\end{thm}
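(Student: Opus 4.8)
I would prove the two bounds separately; the upper bound is variational and routine, while the lower bound is the substantial part. For the upper bound the plan is to use a product trial state $\psi\otimes\phi$ and first minimize over coherent states $\phi$ at fixed $\psi$: completing the square in the field shows that the optimal $\phi$ is the coherent state determined by the density $|\psi|^2$, which reduces the energy to the Pekar-type functional $\mathcal{E}[\psi]=\frac{1}{2m}\|\nabla\psi\|^2-\alpha\int|\widehat h(k)|^2\,|\widehat{|\psi|^2}(k)|^2\,dk$, where $\widehat h=v/\sqrt{\epsilon}$. Taking $\psi$ a Gaussian whose density has variance $\sigma^2$ in each coordinate, so that $|\widehat{|\psi|^2}(k)|^2=e^{-\sigma^2 k^2}$, and using the elementary bound $e^{-t}\ge 1-t$ together with $\int k^2|\widehat h|^2=\|\nabla h\|^2$, gives $\mathcal{E}[\psi]\le\frac{d}{8m\sigma^2}-\alpha\|h\|^2+\alpha\sigma^2\|\nabla h\|^2$. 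Optimizing over $\sigma^2$ (namely $\sigma^2=(8m\alpha)^{-1/2}\sqrt d\,\|\nabla h\|^{-1}$) produces exactly the claimed upper bound $-\alpha\|h\|^2+\sqrt{d\alpha/2m}\,\|\nabla h\|$, so no error terms survive here.

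For the lower bound I would work directly with $\mathbb{H}$, avoiding the fiber decomposition so that the estimate is automatically uniform in $P$. Completing the square in the field at fixed particle position gives the exact identity $\mathbb{H}+\alpha\|h\|^2=-\frac{1}{2m}\Delta_x+\int\epsilon(k)\,b_k^\dagger b_k\,dk$ with the shifted operators $b_k=a_k+\sqrt{\alpha}\,\overline{v(k)}\,\epsilon(k)^{-1}e^{-ik\cdot x}$. The crucial point is that $b_k$ depends on $x$, so that $[-i\partial_{x_j},b_k]=-k_j\sqrt{\alpha}\,\overline{v(k)}\,\epsilon(k)^{-1}e^{-ik\cdot x}$ is a nonzero c-number: it is the interplay between the particle kinetic energy and the field energy, mediated by this commutator, that will produce the harmonic zero-point energy. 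Concretely I would introduce a self-adjoint collective field operator $\mathbf{B}_j$, built linearly from $b_k,b_k^\dagger$ with weight proportional to $k_j v(k)$, and use the manifest positivity $\sum_j\big(\tfrac{1}{\sqrt{2m}}(-i\partial_{x_j})-i\kappa\mathbf{B}_j\big)^\dagger\big(\tfrac{1}{\sqrt{2m}}(-i\partial_{x_j})-i\kappa\mathbf{B}_j\big)\ge 0$ to obtain the operator inequality $-\frac{1}{2m}\Delta_x\ge-\kappa^2\sum_j\mathbf{B}_j^2+\frac{i\kappa}{\sqrt{2m}}\sum_j[-i\partial_{x_j},\mathbf{B}_j]$, valid for a free parameter $\kappa$.

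Tuning the weight so that the commutator reproduces $\int k^2|v|^2/\epsilon=\|\nabla h\|^2$, the c-number part of $\frac{i\kappa}{\sqrt{2m}}\sum_j[-i\partial_{x_j},\mathbf{B}_j]$ yields the leading correction $\sqrt{d\alpha/2m}\,\|\nabla h\|$. The two error terms should then emerge from the remaining operators. Dominating $\kappa^2\sum_j\mathbf{B}_j^2$ by the field energy $\int\epsilon\,b_k^\dagger b_k$ through Cauchy--Schwarz leaves, via the canonical commutator $[b_k,b_{k'}^\dagger]=\delta(k-k')$, a constant of size $\sim\|\nabla\eta\|^2/\|\nabla h\|^2$ (recall $\widehat\eta=v$, so $\|\nabla\eta\|^2=\int k^2|v|^2$); while the anharmonic remainder — the failure of the field-induced restoring force to be exactly linear, i.e.\ the $k^4$ correction that in the upper bound corresponds to the next term beyond $1-t$ in $e^{-t}$ — contributes $\sim\|\Delta h\|^2/\|\nabla h\|^2$ with the kinetic prefactor $1/m$, consistent with $\|\Delta h\|^2=\int k^4|v|^2/\epsilon$. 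Both remainders are finite precisely because $\mathbb{H}$ is regular, i.e.\ $\eta\in W^{1,2}$ and $h\in W^{2,2}$.

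The main obstacle will be to extract the \emph{sharp} constant in the leading correction: a naive component-by-component Cauchy--Schwarz loses a $d$-dependent factor, and to recover exactly $\sqrt{d\alpha/2m}\,\|\nabla h\|$ — so that the lower bound matches the upper bound to order $\alpha^{1/2}$ — one must treat the $d$ components of $\mathbf{B}$ jointly and exploit the rotational symmetry of $v$ and $\epsilon$, choosing $\kappa$ and the weight optimally. Controlling the anharmonic remainder as an operator bound (not merely in expectation on low-energy states) and confirming that it is governed by $\|\Delta h\|^2$ with the stated prefactor is the most delicate estimate, and is exactly where the $W^{2,2}$ regularity of $h$ enters. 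Once the lower bound $E_0\ge-\alpha\|h\|^2+\sqrt{d\alpha/2m}\,\|\nabla h\|-\frac{d}{2}\|\nabla\eta\|^2/\|\nabla h\|^2-\frac{d}{8m}\|\Delta h\|^2/\|\nabla h\|^2$ is in place, combining it with the upper bound and dividing by $\alpha^{1/2}$ gives $\lim_{\alpha\to\infty}\alpha^{-1/2}(E_0+\alpha\|h\|^2)=\sqrt{d/2m}\,\|\nabla h\|$, and in particular \eqref{e1}, since the two error terms are $O(1)$.
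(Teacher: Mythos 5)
Your upper bound is correct and, up to cosmetic differences, is the paper's argument: the reduction to the Pekar functional via coherent states is identical, and your explicit Gaussian with $e^{-t}\ge 1-t$ is equivalent to the paper's use of $\cos z\ge 1-z^2/2$ on the kernel $g$ followed by the Heisenberg uncertainty principle; both yield the sharp constant $\sqrt{d\alpha/2m}\,\|\nabla h\|$. Your lower bound also has the right architecture (a Lieb--Yamazaki completed square trading the particle kinetic energy for a double commutator, with the sharp $d$-dependence recovered from Bessel's inequality and rotation invariance, exactly as in the paper), but as written it contains a genuine gap at the central step.

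The problem is your claim that $[-i\partial_{x_j},b_k]=-k_j\sqrt{\alpha}\,\overline{v(k)}\,\epsilon(k)^{-1}e^{-ik\cdot x}$ is a ``c-number'' and that the commutator term $\frac{i\kappa}{\sqrt{2m}}\sum_j[-i\partial_{x_j},\mathbf{B}_j]$ ``yields the leading correction $\sqrt{d\alpha/2m}\,\|\nabla h\|$.'' With your stated weight $f_j(k)\propto k_j v(k)$ (no phase), $\sum_j[-i\partial_{x_j},\mathbf{B}_j]$ is a multiplication operator proportional to $\sqrt{\alpha}\int k^2|v(k)|^2\epsilon(k)^{-1}\cos(k\cdot x)\,dk$. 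This function attains the value $\sqrt{\alpha}\,\|\nabla h\|^2$ only at $x=0$ and, by the Riemann--Lebesgue lemma, decays at infinity, so its infimum over $x$ is at most zero; an operator lower bound can only use that infimum, and the leading correction evaporates. Replacing the function by its maximum amounts to assuming the particle is localized at the bottom of the induced well, which is precisely what must be proven. The fix is to attach the conjugate phase to the weight, i.e.\ to take $\mathbf{B}_j$ proportional to $\int k_j v(k)e^{ik\cdot x}b_k\,dk+\mathrm{h.c.}$ (equivalently, to use the commutator of $-i\nabla_x$ with the full interaction, phases included, which is what the paper does after the Lee--Low--Pines transformation where it becomes $[P_f,\mathbb{V}]$). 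Then the phases cancel and the commutator produces the genuine constant $\propto\sqrt{\alpha}\,\|\nabla h\|^2$ \emph{plus} a residual field operator linear in $b_k,b_k^\dagger$ with form factor $k^2v(k)$; that residual term, absorbed by completing the square against the field energy, is the actual source of the $\frac{d}{8m}\|\Delta h\|^2/\|\nabla h\|^2$ error (the paper implements this by modifying the form factor to $(1-\lambda k^2)v(k)$ before the Weyl shift), rather than an ``anharmonicity of the restoring force'' as you suggest. With that correction, your $\mathbf{B}_j^2$ estimate (normal ordering plus Bessel against $\int\epsilon\, b_k^\dagger b_k$, giving the $\frac{d}{2}\|\nabla\eta\|^2/\|\nabla h\|^2$ error) and the final optimization over $\kappa$ go through as you describe and reproduce the paper's bound.
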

\begin{rem}
 As recalled in detail in the proof, the semiclassical limit naturally gives rise to the {\em Pekar functional}
 \begin{equation}\label{Pek}
  \mathcal{E}^{\rm{Pek}}_{\alpha}(\psi)=\frac{1}{2m}\int |\nabla\psi(x)|^2dx-\alpha\iint |\psi(x)|^2g(x-y)|\psi(y)|^2 dxdy,
\end{equation}where 
\begin{equation}\label{g}
g(x)=\int \frac{|v(k)|^2}{\epsilon(k)}e^{ik \cdot x}dk=(\bar{h}\ast h)(x),  
\end{equation} 
and 
\begin{equation}\label{pekar_energy}
E_0\leq E^{\mathrm{Pek}}:=\inf_{\psi: \|\psi\|=1}\mathcal{E}^{\rm{Pek}}_{\alpha}(\psi).
\end{equation} 
In the Fr\"ohlich case, where $g(x)=\frac{1}{|x|}$, one has $\inf \mathcal{E}_{\alpha}^{\rm{Pek}}(\psi)=\alpha^{2}\inf \mathcal{E}^{\rm{Pek}}_1(\psi)\equiv \alpha^2 e^{\mathrm{Pek}}$, in particular the contribution of the kinetic energy $\int |\nabla\psi|^2$ to $\mathcal{E}^{\rm{Pek}}_\alpha$ is \emph{not} negligible in this limit. Existing results \cite{LT} show that in this particular case 
\begin{equation}\label{kminus1}
  e^{\rm{Pek}}\geq \alpha^{-2} E_0 \geq e^{\mathrm{Pek}}-C\alpha^{-1/5}
\end{equation} for some $C>0$ and $\alpha$ large. If instead of $\mathbb{R}^d$ one considers a sufficiently regular subset $\Omega$ thereof (suitably rescaled to be of linear size $\alpha^{-1}$), with the corresponding modification  of $\eta$ involving  the  Laplacian on $\Omega$, then also the subleading correction to $E_0$, being of order $\alpha^0$, has been rigorously established \cite{FrankSe, DarioSe}. Adapting some of the methods in that proof to improve the control on the UV divergence of the model, the exponent $-1/5$ in the lower bound \eqref{kminus1} can be slightly improved to $-20/73$ \cite{mys19}.\\ \indent In our case, we lose the scaling properties of the original Fr\"ohlich model, and the semiclassical energy is a more general function of $\alpha$; our result captures the first two terms that emerge from the expansion of the kernel \eqref{g} of the Pekar functional  around its maximum. \end{rem}

\begin{rem} It can be argued that the $O(1)$ correction in \eqref{bd} is optimal as far as the order of magnitude is concerned. 
These $O(1)$ corrections can be attributed to two sources: the quantum fluctuations of the field and a purely classical effect of anharmonicity of the actual potential well that accompanies the particle's motion. 
We do not know how to obtain the sharp $O(1)$ correction to $E_0$, however.
\end{rem}

\begin{rem}\label{rem13}
While the lower bound in Theorem \ref{thm1} holds for all values of the parameters in the problem, it is optimal only in our case of interest, i.e., for large $\alpha$ with $m$  fixed. However, our analysis leads to a distinct result in the opposite regime with $m$ large at $\alpha$ fixed. In this case our technique yields a positive term $\frac{\alpha}{4m}\frac{\|\nabla h\|^4}{\|\nabla \eta\|^2}$ as a leading-order finite mass correction to the exact ground state energy at infinite mass $-\alpha\|h\|^2$ in the lower bound. Perturbation theory predicts here $-\alpha\|h\|^2+\frac{\alpha}{2m}\int \frac{k^2|v(k)|^2}{\epsilon(k)^2}dk+o(m^{-1})$. Clearly $\|\nabla h\|^4\leq \|\nabla \eta\|^2 \left(\int k^2 \frac{|v(k)|^2}{\epsilon(k)^2}dk\right)$, with equality in the case a constant dispersion relation, but even in this case the resulting correction is off by a factor of two with respect to the result from perturbation theory. 
\end{rem}

The proof of Theorem~\ref{thm1} will be given in Section~\ref{sec:21}. 
The upper bound relies on a straightforward expansion of the kernel of the Pekar functional. For the lower bound we  closely follow the approach of Lieb and Yamazaki \cite{LY} in their analysis of the Fr\"ohlich model. In the regular case considered here, the obtained bounds turn out to be sharp, however.

\subsubsection{Divergence of the effective mass}

Our further results concern the effective mass problem (for other rigorous work concerning this problem, we refer to \cite{DeckerPizzo, Spohn, LiebLoss, LiSe19} and references therein). First, we present a generalization of \cite{LiSe19} by showing that the effective mass diverges as $\alpha\rightarrow\infty$, in all spatial dimensions, and for all regular polaron models with massive fields. This is to be expected from the fact that the strong-coupling limit and the adiabatic limit coincide: while the particle's mass is fixed, the relevant dynamical degrees of freedom behave like a free particle with very large mass, which leads to the effective separation of timescales of the field and of the particle. 

\begin{thm}\label{thm2}
Let $\mathbb{H}$ satisfy the assumptions of Theorem \ref{thm1}, with a dispersion relation that is massive and subadditive. Then there exists a constant $C>0$ s.t. for all $\alpha\gg 1$ we have 
\begin{equation}
    M_{\mathrm{eff}}\geq C\alpha^{1/4}. 
  \end{equation}
\end{thm}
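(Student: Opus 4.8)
The plan is to use that, for a massive and subadditive dispersion, $E(P)$ is real-analytic near $P=0$ with a nondegenerate minimum there and $E(0)$ an isolated simple eigenvalue (as recalled above). Hence $1/(2M_{\mathrm{eff}})$ is the curvature of $E$ at the origin, and second-order perturbation theory in $P$ applied to $\mathbb{H}_P=\mathbb{H}_0-\tfrac1m P\cdot P_f+\tfrac{|P|^2}{2m}$ gives, with $\psi_0$ the ground state of $\mathbb{H}_0$ and $Q=1-|\psi_0\rangle\langle\psi_0|$,
$$\frac{1}{2M_{\mathrm{eff}}}=\frac{1}{2m}-\frac{1}{m^2}X,\qquad X:=\Big\langle\psi_0,(P_f)_1\,\frac{Q}{\mathbb{H}_0-E(0)}\,Q\,(P_f)_1\psi_0\Big\rangle .$$
Since $X\ge0$ one always has $M_{\mathrm{eff}}\ge m$, while $M_{\mathrm{eff}}=m^2/(m-2X)$, so that $M_{\mathrm{eff}}\to\infty$ is equivalent to $X\to m/2$ and the claim reduces to the quantitative lower bound $X\ge \tfrac m2-c\,\alpha^{-1/4}$ for some $c>0$.

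To bound $X$ from below I would use the variational principle $X\ge 2\,\mathrm{Re}\langle\chi,(P_f)_1\psi_0\rangle-\langle\chi,(\mathbb{H}_0-E(0))\chi\rangle$ (valid since $(P_f)_1\psi_0\perp\psi_0$), taking $\chi=\int g(k)(a_k^\dagger-a_k)\,dk\,\psi_0$ with a real profile $g(k)=k_1\gamma(|k|)$; equivalently one evaluates $\mathbb{H}_P$ on the momentum-boosted coherent state $\exp\{P_1\!\int g(a^\dagger-a)\}\psi_0$. After normalising $v\ge0$ (a phase of $v$ is gauged into $a_k$) and using the exact commutators $[\mathbb{H}_0,(P_f)_1]=\sqrt\alpha\!\int k_1 v(a_k-a_k^\dagger)$ and $[\mathbb{V},\int g(a^\dagger-a)]\in\mathbb{C}$, optimising the scalar size of $\chi$ yields $X\ge b^2/\mathcal D$, where the gain is the coherent amplitude $b=\int k_1^2\gamma\,\beta_k\,dk$ with $\beta_k=\langle a_k\rangle_{\psi_0}$, and the cost is $\mathcal D=\int\epsilon k_1^2\gamma^2+\tfrac1{2m}\sum_j\langle\Phi(k_jg)^2\rangle_{\psi_0}$, $\Phi(f):=\int f(a+a^\dagger)$. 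The decisive feature is that $\langle\Phi(k_1g)\rangle_{\psi_0}=2b$, so $\mathcal D=\tfrac{2b^2}{m}+Q_{\mathrm{eff}}$ with $Q_{\mathrm{eff}}=\int\epsilon k_1^2\gamma^2+\tfrac1{2m}\sum_j\mathrm{Var}_{\psi_0}\Phi(k_jg)$; substituting gives $\tfrac m2-X\le \tfrac{m^2}{4}\,Q_{\mathrm{eff}}/b^2$, i.e. $M_{\mathrm{eff}}\gtrsim 2b^2/Q_{\mathrm{eff}}$. The mechanism is thus that the coherent displacement carries momentum $\sim b$ at field-energy cost $Q_{\mathrm{eff}}$, and divergence follows once $b^2\gg Q_{\mathrm{eff}}$.

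Two facts about the true $\psi_0$ then enter, both to be read off from Theorem~\ref{thm1} and the stationarity identities $\langle[a_k,\mathbb{H}_0]\rangle_{\psi_0}=0$. First, a lower bound on the coherent amplitude: the identity gives $\beta_k=-\big(\sqrt\alpha\,v+\tfrac1m\langle(k\!\cdot\!P_f)a_k\rangle\big)/(\epsilon+k^2/2m)$, so as long as the correlation correction is subdominant one has $\beta_k\approx-\sqrt\alpha\,v/\epsilon$ and $b^2\sim\alpha$. Second, an upper bound on the quadrature fluctuations: localising onto $Q$ and using the spectral gap $\Delta$ above $E(0)$ yields $\mathrm{Var}_{\psi_0}\Phi(f)\le\Delta^{-1}\langle\Phi(f)\psi_0,(\mathbb{H}_0-E(0))\Phi(f)\psi_0\rangle\le\Delta^{-1}\!\int\epsilon f^2$ by a double-commutator computation, so that $Q_{\mathrm{eff}}\lesssim\int\epsilon k_1^2\gamma^2(1+k^2/2m\Delta)$ remains of order $\alpha^0$ relative to $b^2$. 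Inserting these into $M_{\mathrm{eff}}\gtrsim 2b^2/Q_{\mathrm{eff}}$ and optimising over $\gamma$ gives, formally, $M_{\mathrm{eff}}\sim\alpha$.

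The main obstacle is making these two inputs rigorous with the precision that Theorem~\ref{thm1} actually supplies. The energy expansion there controls an $O(\sqrt\alpha)$ subleading term only up to $O(1)$, i.e. to relative precision $\alpha^{-1/2}$; since the energy is quadratic in the deviation of $\psi_0$ from the Pekar coherent profile, this controls that deviation — and hence the correlation correction to $\beta_k$ and the excess in $\mathrm{Var}_{\psi_0}\Phi$ over its vacuum value — only at the level $\alpha^{-1/4}$. Propagating this through the competition in $Q_{\mathrm{eff}}$, balanced against a $k$-cutoff in the profile $\gamma$, is what reduces the formal rate $\alpha$ to the stated $M_{\mathrm{eff}}\ge C\alpha^{1/4}$. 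I expect the genuinely hard step to be the rigorous lower bound on $b$ — equivalently, showing that the field-momentum correlations in $\psi_0$ do not cancel the Pekar displacement on the relevant range of momenta — together with the fluctuation estimate that this requires.
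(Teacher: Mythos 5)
Your starting point (second-order perturbation theory for $E(P)$ at $P=0$, valid because the massive, subadditive dispersion guarantees an isolated simple ground state of $\mathbb{H}_0$) coincides with the paper's, and your reduction of the claim to $X\geq \tfrac m2-c\,\alpha^{-1/4}$ is correct. From there, however, you take a genuinely different and substantially harder route. The paper simply applies Cauchy--Schwarz to the resolvent — equivalently, it takes your trial vector $\chi$ proportional to $P_f\psi_0$ itself rather than to a coherent displacement $\int g\,(a^\dagger-a)\,\psi_0$. This reduces everything to the two scalar quantities $\langle\psi_0|P_f^2|\psi_0\rangle$ and $\langle\psi_0|P_f(\mathbb{H}_0-E_0)P_f|\psi_0\rangle=\tfrac12\langle[P_f,[\mathbb{H}_0,P_f]]\rangle$, and the decisive point is that both are then controlled by a Feynman--Hellmann/variational argument: one perturbs $\mathbb{H}_0$ by $\lambda P_f^2$ and by $\mu\,[P_f,[\mathbb{H}_0,P_f]]$, and uses \emph{only} the two-sided energy bounds of Theorem~\ref{thm1} (the $\lambda P_f^2$ perturbation just rescales the mass) to extract $\langle P_f^2\rangle\geq\tfrac{dm\omega}{2}(1-C\alpha^{-1/4})$ and $\tfrac12\langle[P_f,[\mathbb{H}_0,P_f]]\rangle\leq\tfrac{dm\omega^2}{2}(1+C\alpha^{-1/4})$; the ratio is $\tfrac{dm}{2}$ to leading order and the $\alpha^{-1/4}$ arises from trading the $O(1)$ energy error against the perturbation strength. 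No information about the ground state beyond its energy is ever needed.

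Your route, by contrast, requires two inputs that you correctly identify as the hard part but do not supply, and this is a genuine gap rather than a routine omission. First, the lower bound $b^2\gtrsim\alpha$ on the coherent amplitude requires showing that the correlation term $\tfrac1m\langle(k\cdot P_f)a_k\rangle_{\psi_0}$ in the stationarity identity does not cancel the Pekar term $-\sqrt\alpha\,\overline{v}/\epsilon$ on the relevant momentum range; nothing in Theorem~\ref{thm1} gives pointwise (or even suitably integrated) control of $\langle a_k\rangle_{\psi_0}$. Second, your proposed mechanism for degrading the formal rate $\alpha$ to the stated $\alpha^{1/4}$ — ``the energy is quadratic in the deviation of $\psi_0$ from the Pekar coherent profile, so an $O(1)$ energy error controls the deviation at level $\alpha^{-1/4}$'' — presupposes a coercivity (quadratic lower bound) of the energy around the coherent minimizer, which is a nontrivial Hessian/non-degeneracy statement proved neither in your sketch nor in the paper; energy asymptotics alone do not yield state convergence. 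The same applies to the variance bound $\mathrm{Var}_{\psi_0}\Phi(f)\leq\Delta_{\rm gap}^{-1}\langle\Phi(f)\psi_0,(\mathbb{H}_0-E_0)\Phi(f)\psi_0\rangle$, which additionally needs a spectral gap of $\mathbb{H}_0$ above $E_0$ that is uniform in $\alpha$ — plausible, but not established here. Your scheme is essentially that of Lieb--Seiringer (the paper notes that this method can be made to work), but as written the proof is incomplete at precisely the steps that carry the quantitative content; the paper's Cauchy--Schwarz plus Feynman--Hellmann argument is designed to avoid them entirely.
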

\begin{rem}
The assumption of subadditivity of $\epsilon$ is only used to ensure the existence of a ground state of $\mathbb{H}_0$, which is proved in \cite{Moeller}.
\end{rem}
\begin{rem}
We emphasize that the result holds regardless of the spatial dimension, assuming, of course, the required regularity of $\mathbb{H}$. On the other hand, in the physics literature the effective mass of a polaron has been investigated numerically also for various different polaron-type models \cite{PenaArdila, Pastukhov, Timour}, and it appears that for some of these models one can expect different behavior of the effective mass in different dimensions.
\end{rem}

Our proof takes the formula for the inverse of the effective mass from second-order perturbation theory as a starting point, and provides an upper bound on this quantity. The proof of this bound relies heavily on Theorem \ref{thm1}. It can be shown that the same conclusion can be reached with the method from \cite{LiSe19}. The regularity enables us to simplify the argument, and also to provide an explicit estimate on the rate of the divergence. However, based on the semiclassical analysis, we expect that the effective mass should actually, in the regular case, diverge much faster, namely linearly in $\alpha$. We perform this semiclassical analysis subsequently (see also \cite{Tempere}) before stating our last result, which addresses the effective mass quotient at non-zero $P$. 
 
 \subsubsection{Semiclassical analysis of the effective mass} 
 
As discussed above, the behavior of the system at strong coupling can be expected to be inferable from the semiclassical functional 
\begin{equation}\label{class}
  \frac{1}{2m}\int |\nabla \psi(x)|^2dx + 2\sqrt{\alpha}\mathfrak{Re} \int v(p)\varphi(p)\rho_\psi(p) dp +\int \epsilon(p)|\varphi(p)|^2 dp
\end{equation} 
with $\rho_\psi(p)=\int |\psi(x)|^2e^{ip\cdot x}dx$ and $\varphi:\mathbb{R}^d \to \mathbb{C}$ the classical field, which carries  momentum $\int p|\varphi(p)|^2 dp$. We wish to minimize \eqref{class} under the constraint that the total  momentum of the system be $P$. 
With $u$ a Lagrange multiplier (which can be interpreted as the velocity), the relevant functional to be minimized is thus 
\begin{equation}\label{hfr}
\begin{split}
  \mathcal{H}_P(\psi,\varphi,u)&=\int \epsilon(p)|\varphi(p)|^2dp+\frac{1}{2m}\int p^2 |\hat{\psi}(p)|^2dp + \\& \quad + 2\sqrt{\alpha}\mathfrak{Re} \int v(p)\varphi(p)\rho_\psi(p) dp +u\cdot \left(P-\int p\left(|\varphi(p)|^2+|\hat{\psi}(p)|^2\right) dp\right).
  \end{split}
\end{equation}
We expect continuity and accordingly $u\rightarrow 0$ as $P\rightarrow 0$; we also suppose that to leading order in $|u|$, the particle moves with velocity $u$ while maintaining its waveform, i.e., the $\psi$ minimizing \eqref{hfr} is approximately 
\begin{equation}
  \hat{\psi}_u(p)=\widehat{e^{im u\cdot x}\psi^{\mathrm{Pek}}_{\alpha}}(p) = \hat\psi_\alpha^{\rm Pek} (p - mu)
\end{equation} where $\psi^{\mathrm{Pek}}_{\alpha}$ minimizes \eqref{Pek}. Plugging this into \eqref{hfr}, we minimize with respect to the field, with the result that
  \begin{equation}
  \varphi_u(p)= -\sqrt{\alpha}\frac{\overline{\rho^{\mathrm{Pek}}_\alpha(p)v(p)}}{\epsilon(p)-u\cdot p}
\end{equation}  
with $\rho^{\rm Pek}_\alpha = \rho_{\psi^{\rm Pek}_\alpha}$. 
Accordingly, the Lagrange multiplier $u$ has to be chosen such that
 \begin{equation}\label{vp}
  P=mu+\alpha\int p \frac{|v(p)|^2|\rho_\alpha^{\mathrm{Pek}}(p)|^2}{(\epsilon(p)-u\cdot p)^2}dp.
\end{equation} 
Expanding this to leading order in $u$, we have
\begin{equation}\label{pv}
  P=\left(m+\frac{2\alpha}d \int p^2 \frac{|v(p)|^2|\rho_\alpha^{\mathrm{Pek}}(p)|^2}{\epsilon(p)^3}\right)u.
\end{equation} 
We further evaluate the energy $\mathcal{H}_P(\psi_u,\varphi_u,u)$ and expand it to second order in $P$, with the result that 
 \begin{equation}
\mathcal{H}_P(\psi_u,\varphi_u,u)\approx E^{\mathrm{Pek}}+\left(2m+\frac{4\alpha}{d}\int p^2 \frac{|v(p)|^2|\rho_\alpha^{\mathrm{Pek}}(p)|^2}{\epsilon(p)^3}dp\right)^{-1}P^2 
\end{equation}  with $E^{\mathrm{Pek}}$ defined in \eqref{pekar_energy}.
We are thus led  to the definition of the \emph{Pekar mass formula} \begin{equation}\label{pekmass}
  M^{\mathrm{Pek}}_{\alpha}:=\frac{2\alpha}{d}\iint |\psi^{\mathrm{Pek}}_{\alpha}(x)|^2R(x-y) |\psi^{\mathrm{Pek}}_{\alpha}(y)|^2dxdy=\frac{2}{d}\int p^2 \frac{|{\varphi}_{\alpha}^{\mathrm{Pek}}(p)|}{\epsilon(p)}^2~dp
\end{equation} where $R(x):=\int \frac{p^2|v(p)|^2}{\epsilon(p)^3}e^{ip\cdot x}dp$, $\psi^{\mathrm{Pek}}_{\alpha}$ minimizes \eqref{Pek} and $\varphi_{\alpha}^{\mathrm{Pek}}$ is the corresponding minimizing field. If we evaluate the above expression for the original Fr\"ohlich model, in which case  $\psi^{\mathrm{Pek}}_{\alpha}(x)=\alpha^{3/2}\psi^{\mathrm{Pek}}_1(\alpha x)$ and $R(x)=4\pi\delta(x-y)$, we obtain the celebrated \emph{Landau--Pekar mass formula} \cite{LanPek} \begin{equation}\label{LP}
M^{\mathrm{LP}}=\frac{8\pi}{3}\alpha^4 \|\psi^{\mathrm{Pek}}_1\|_4^4.
\end{equation} In the regular case we expect, given Theorem \ref{thm1}, that $|\psi^{\mathrm{Pek}}_{\alpha}|^2$ tends to a $\delta$-function as $\alpha\rightarrow\infty$, and accordingly that
\begin{equation}\label{def:MPek}
  \lim_{\alpha\rightarrow\infty}\alpha^{-1}M^{\mathrm{Pek}}_{\alpha}=\frac{2}{d}\int p^2\frac{|v(p)|^2}{\epsilon(p)^3}dp=:M^{\mathrm{Pek}}
\end{equation}  holds true. This leads us to the following \begin{conjecture}\label{conj}
Let $\mathbb{H}$ be regular. 
Then \begin{equation}\label{mass}
    \lim_{\alpha\rightarrow\infty}\alpha^{-1}M_{\mathrm{eff}}=M^{\mathrm{Pek}}=\frac{2}{d}\int \frac{p^2 |v(p)|^2}{\epsilon(p)^3}dp .
  \end{equation}
\end{conjecture}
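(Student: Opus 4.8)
\medskip
\noindent\textbf{A strategy toward Conjecture~\ref{conj}.} The plan is to reduce the statement to a sharp two–term evaluation of the inverse effective mass supplied by second–order perturbation theory, and then to compute the resulting resolvent quadratic form in a Bogoliubov (quadratic) approximation of $\mathbb{H}_0$. Reading off from \eqref{HP} that $\mathbb{H}_P=\mathbb{H}_0+\tfrac{P^2}{2m}-\tfrac1m P\cdot P_f$, and using $\langle P_f\rangle_{\Psi_0}=0$ by rotation invariance, Rayleigh--Schr\"odinger perturbation theory around the simple isolated eigenvalue $E(0)$ (whose existence is guaranteed in the massive, subadditive case) gives, with $\Psi_0$ the ground state of $\mathbb{H}_0$ and $Q=1-|\Psi_0\rangle\langle\Psi_0|$,
\begin{equation*}
  \frac{1}{M_{\mathrm{eff}}}=\frac1m\Big(1-\frac2m\,\Gamma\Big),\qquad
  \Gamma:=\Big\langle\Psi_0\Big|\,(P_f)_1\,\frac{Q}{\mathbb{H}_0-E(0)}\,(P_f)_1\,\Big|\Psi_0\Big\rangle ,
\end{equation*}
where $(P_f)_1$ is any fixed component. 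Proving \eqref{mass} is therefore equivalent to establishing the expansion $\Gamma=\tfrac m2-\tfrac{m^2}{2\alpha M^{\mathrm{Pek}}}+o(\alpha^{-1})$: the leading term $\tfrac m2$ forces $M_{\mathrm{eff}}\to\infty$ (recovering Theorem~\ref{thm2} with the sharp constant), while the $O(\alpha^{-1})$ correction yields precisely $M_{\mathrm{eff}}=\alpha M^{\mathrm{Pek}}(1+o(1))$.

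Next I would apply the Weyl transformation implementing the coherent shift $a_k\mapsto a_k-\sqrt\alpha\,\overline{v(k)}/\epsilon(k)$ (the infinite–mass minimizer), writing $b_k$ for the shifted operators. Since $v$ is real and even, the linear terms cancel and one obtains for the unitarily transformed operator, still denoted $\mathbb{H}_0$, the \emph{exact} identity $\mathbb{H}_0=\mathbb{F}_b+\tfrac{1}{2m}\big(Q_f-\sqrt\alpha\,L\big)^2-\alpha\|h\|^2$, with $\mathbb{F}_b=\int\epsilon\,b^\dagger b$, $Q_f=\int k\,b_k^\dagger b_k\,dk$ and $L_j=\int k_j\frac{v(k)}{\epsilon(k)}(b_k^\dagger+b_k)\,dk$. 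The quadratic part $\mathbb{H}_{\mathrm{Bog}}=\mathbb{F}_b+\tfrac{\alpha}{2m}\sum_j L_j^2$ is the relevant Bogoliubov Hamiltonian; it stiffens the $d$ collective ``translational'' modes carried by the $L_j$ to frequency of order $\sqrt\alpha$, thereby reproducing the zero–point energy $\sqrt{d\alpha/2m}\,\|\nabla h\|$ of Theorem~\ref{thm1}. Under the shift $(P_f)_1\mapsto(Q_f)_1-\sqrt\alpha\,L_1$, and using the parity in the number of $b$'s preserved by $\mathbb{H}_{\mathrm{Bog}}$ (which kills the odd$\times$even cross terms), one is led to the leading approximation $\Gamma\approx\alpha\,\langle\Omega_{\mathrm{Bog}}|L_1(\mathbb{H}_{\mathrm{Bog}}-E_{\mathrm{Bog}})^{-1}Q\,L_1|\Omega_{\mathrm{Bog}}\rangle$, with $\Omega_{\mathrm{Bog}}$, $E_{\mathrm{Bog}}$ the quadratic ground state and its energy.

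The quadratic form on the right is exactly solvable. The self–coupling $\tfrac{\alpha}{2m}L_1^2$ dresses the bare static response $\chi_0=2\int k_1^2\frac{v^2}{\epsilon^3}\,dk=M^{\mathrm{Pek}}$ by a geometric, random–phase–type resummation, producing $\chi=\chi_0/(1+\tfrac{\alpha}{m}\chi_0)$ and hence $\Gamma_{\mathrm{Bog}}=\tfrac{\alpha}{2}\chi=\tfrac{\alpha}{2}\,\frac{M^{\mathrm{Pek}}}{1+\frac{\alpha}{m}M^{\mathrm{Pek}}}=\tfrac m2-\tfrac{m^2}{2\alpha M^{\mathrm{Pek}}}+O(\alpha^{-2})$, exactly the expansion sought above; I would make this rigorous by diagonalizing the single–collective–mode–coupled bath through its resolvent/Green's function. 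In parallel I would verify the purely classical counterpart $\lim_{\alpha\to\infty}\alpha^{-1}M^{\mathrm{Pek}}_\alpha=M^{\mathrm{Pek}}$ appearing in \eqref{pekmass}, by showing that $|\psi^{\mathrm{Pek}}_\alpha|^2$ concentrates (localization length $\sim\alpha^{-1/4}$, matching the harmonic well of frequency $\omega$ in \eqref{omega}), so that $\rho^{\mathrm{Pek}}_\alpha(p)\to1$ with an integrable dominating function and the convergence follows by dominated convergence.

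The hard part is to upgrade the heuristic $\Gamma\approx\Gamma_{\mathrm{Bog}}$ to a genuine two–sided bound accurate to \emph{relative} order $\alpha^{-1}$, since $\Gamma$ is an $O(1)$ quantity whose physically relevant information sits in its $O(\alpha^{-1})$ correction. The lower bound on $M_{\mathrm{eff}}$ seems accessible: from the variational identity $\Gamma=\sup_{\Phi\perp\Psi_0}\big(2\,\mathfrak{Re}\,\langle\Phi|(P_f)_1\Psi_0\rangle-\langle\Phi|(\mathbb{H}_0-E(0))|\Phi\rangle\big)$ one inserts the explicit Bogoliubov trial vector $\Phi=(\mathbb{H}_{\mathrm{Bog}}-E_{\mathrm{Bog}})^{-1}Q\,(P_f)_1\Psi_0$, refining the trial–state estimate already used for Theorem~\ref{thm2}. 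The genuinely delicate direction is the \emph{upper} bound on $M_{\mathrm{eff}}$, which demands an operator lower bound $\mathbb{H}_0-E(0)\ge\mathbb{H}_{\mathrm{Bog}}-E_{\mathrm{Bog}}-(\text{error})$ on the excited subspace, controlling the dropped cubic term $-\tfrac{\sqrt\alpha}{2m}(Q_f\!\cdot\!L+L\!\cdot\!Q_f)$ and quartic term $\tfrac{1}{2m}Q_f^2$ to the needed precision; this is exactly a quantitative Bogoliubov lower bound, and it is where I expect the main obstacle to lie. A further subtlety, which is presumably the reason that the accompanying theorem of this paper addresses $M(P)$ at fixed nonzero $P$ rather than $M_{\mathrm{eff}}$, is that the second–order formula presupposes interchanging the limits $P\to0$ and $\alpha\to\infty$; making this uniform, i.e.\ controlling the $O(P^4)$ remainder uniformly in $\alpha$, is an additional ingredient that the same quadratic analysis would have to supply.
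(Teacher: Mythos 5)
The statement you are addressing is Conjecture~\ref{conj}, which the paper explicitly does \emph{not} prove: the authors write that they are unable to establish \eqref{mass} and instead prove the weaker Theorem~\ref{thm3}, which controls the effective mass quotient $M(P)$ only in the window $\alpha^{1/2}\ll|P|\ll\alpha$, precisely because the interchange of the limits $P\to0$ and $\alpha\to\infty$ (equivalently, control of $E(P)-E(0)$ to precision better than the $O(1)$ error in the ground state energy) is out of reach. So there is no proof in the paper to compare yours against, and your proposal should be judged on its own terms. Your setup is sound and consistent with the paper's own ingredients: the second-order perturbation formula $\tfrac{1}{M_{\mathrm{eff}}}=\tfrac1m(1-\tfrac2m\Gamma)$ is exactly the starting point of the proof of Theorem~\ref{thm2}, the Weyl-shifted form $\mathbb{H}_0=\mathbb{F}_b+\tfrac1{2m}(Q_f-\sqrt\alpha L)^2-\alpha\|h\|^2$ is correct, and your RPA computation $\Gamma_{\mathrm{Bog}}=\tfrac m2-\tfrac{m^2}{2\alpha M^{\mathrm{Pek}}}+O(\alpha^{-2})$ reproduces the conjectured answer (note $\chi_0=2\int k_1^2|v|^2\epsilon^{-3}=M^{\mathrm{Pek}}$ by rotation invariance, as you use).

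However, the proposal is a research program, not a proof, and you say so yourself: the decisive step is the operator lower bound $\mathbb{H}_0-E(0)\ge\mathbb{H}_{\mathrm{Bog}}-E_{\mathrm{Bog}}-(\text{error})$ on the excited subspace, with the error controlled so finely that the \emph{resolvent quadratic form} $\Gamma$ is captured to absolute precision $o(\alpha^{-1})$ --- i.e., the relevant information sits two orders below the leading value $\Gamma\approx m/2$. Controlling the discarded cubic term $-\tfrac{\sqrt\alpha}{2m}(Q_f\cdot L+L\cdot Q_f)$ and quartic term $\tfrac1{2m}Q_f^2$ to that accuracy is precisely the open problem; nothing in your sketch (nor in the paper's Theorems~\ref{thm1}--\ref{thm3}, whose lower bounds are accurate only to $O(1)$ in energy) supplies it. A second unresolved point is your reduction itself: the identity for $1/M_{\mathrm{eff}}$ is exact at fixed $\alpha$, but extracting $\lim_\alpha\alpha^{-1}M_{\mathrm{eff}}$ from it requires the asserted two-term expansion of $\Gamma$ with a remainder $o(\alpha^{-1})$ that is genuinely uniform, and your variational lower bound on $\Gamma$ (inserting the Bogoliubov trial vector) would at best give a one-sided inequality, namely an upper bound on $M_{\mathrm{eff}}$'s inverse, i.e., a lower bound on $M_{\mathrm{eff}}$ --- the direction already covered in spirit by Theorem~\ref{thm2}, though possibly with the sharp constant. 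In short: correct heuristics, correct identification of where the difficulty lies, but the conjecture remains unproved by this argument.
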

This conjecture generalizes the one for the original Fr\"ohlich model, where one expects that $\lim_{\alpha\rightarrow\infty}\alpha^{-4}M_{\mathrm{eff}}=\frac{8\pi}{3}\|\psi^{\mathrm{Pek}}_{1}\|_4^4$, c.f. Eq. \eqref{LP}, as suggested by a calculation by Landau and Pekar \cite{LanPek}. A proof of this prediction remains an outstanding open problem. In the physics literature one also encounters discussions beyond the Fr\"ohlich case that lead to Conjecture \ref{conj} \cite{Tempere} (see also \cite[Eq. 12]{MiszaEnderalp}) and also to the linear dependence of the effective mass on $\alpha$ at strong coupling \cite{Grusdt}.

While we are unable to prove Conjecture \ref{conj}, we are able to prove a related result that can be regarded as a confirmation of the validity of the semiclassical approximation in the effective mass problem. Recall the definition of the effective mass quotient in \eqref{def:MP}.
Instead of the limit $\lim_{\alpha\rightarrow\infty}\alpha^{-1}\lim_{P\rightarrow 0}M(P)$, we consider the combined limit 
\begin{equation}\label{limit}
   \alpha\rightarrow\infty, \quad |P|\to \infty \quad \text{with} \quad |P|/\alpha \rightarrow 0, \quad |P|/\alpha^{1/2} \rightarrow \infty.
\end{equation} which we denote as \[\lim\limits_{\substack{\alpha\rightarrow\infty \\ \alpha^{1/2}\ll |P|\ll \alpha}}.\] Then we have 
\begin{thm}\label{thm3}
  Let $\mathbb{H}$ be regular, and assume that $\epsilon$ is massive and of superfluid type. Then 
  \begin{equation}\label{st1}
    \lim_{\substack{\alpha\rightarrow\infty \\ \alpha^{1/2}\ll |P|\ll \alpha}}\alpha^{-1}M(P)=M^{\mathrm{Pek}}
  \end{equation}
  with $M^{\rm Pek}$ defined in \eqref{def:MPek}. 
  In particular, we have for $\mathbb{H}$ satisfying the assumptions of Theorem \ref{thm1} and for all $P$ with $|P|\leq C\alpha$ for some $C>0$ independent of $P$ and $\alpha$, 
  \begin{equation}\label{mass_upper}
    E(P)\leq -\alpha\|h\|^2+\frac{d\omega}{2}+\frac{P^2}{2\alpha M^{\mathrm{Pek}}}+O(|P|  \alpha^{-1}) ,
  \end{equation}
  where $\omega$ is defined in \eqref{omega}. 
  If in addition the dispersion relation is assumed to be  of superfluid type, we have for all $P$ such that $|P|\leq C' \alpha$ with $C'>0$ small enough, 
  \begin{equation}\label{mass_lower}
    E(P)\geq -\alpha\|h\|^2+\frac{d\omega}{2}+\frac{P^2}{2\alpha M^{\mathrm{Pek}}}-\frac{d}{2}\frac{\|\nabla \eta\|^2}{\|\nabla h\|^2}-\frac{d}{8m}\frac{\|\Delta h\|^2}{\|\nabla h\|^2}-O(P^2\alpha^{-3/2}+|P|^3\alpha^{-2}) .
  \end{equation}
\end{thm}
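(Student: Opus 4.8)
The plan is to derive the limit \eqref{st1} from the two matching bounds \eqref{mass_upper} and \eqref{mass_lower}, and to spend most of the effort on the lower bound. Writing $M(P)=P^2/(2(E(P)-E(0)))$ and inserting $E(0)=-\alpha\|h\|^2+\tfrac{d\omega}2+O(1)$ from Theorem~\ref{thm1}, the two bounds give $E(P)-E(0)=\tfrac{P^2}{2\alpha\mpek}(1+o(1))$ as soon as every error is small compared with the main term $\tfrac{P^2}{2\alpha\mpek}\sim P^2/\alpha$. The binding constraints are exactly those in \eqref{limit}: the $O(1)$ terms (including the gap between the two sides of Theorem~\ref{thm1}) are negligible precisely when $|P|\gg\alpha^{1/2}$, while $O(|P|^3\alpha^{-2})$ is negligible precisely when $|P|\ll\alpha$; the terms $O(P^2\alpha^{-3/2})$ and $O(|P|\alpha^{-1})$ are, respectively, a relative $\alpha^{-1/2}$ and $|P|^{-1}$ correction to it and hence harmless in this window. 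So it remains to prove the two bounds.

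For the lower bound I would fix the velocity $u$ by the field-momentum balance $P=mu+\alpha\int k\,|v(k)|^2(\epsilon(k)-u\cdot k)^{-2}\,dk$ (so $u\sim P/(\alpha\mpek)$), rewrite $\mathbb{F}=\int(\epsilon(k)-u\cdot k)a^\dagger_ka_k\,dk+u\cdot P_f$, and complete the square in the kinetic term:
\[
\mathbb{H}_P=\frac1{2m}\bigl[(P-P_f)-mu\bigr]^2+u\cdot P-\frac m2u^2+\int(\epsilon(k)-u\cdot k)a_k^\dagger a_k\,dk+\sqrt\alpha\,\mathbb{V}.
\]
Here the superfluid assumption enters decisively: $\epsilon(k)-u\cdot k\ge(c-|u|)|k|\ge0$ once $|u|<c$, i.e.\ once $|P|\le C'\alpha$ with $C'$ small, so the Doppler-shifted field energy is a nonnegative operator and the coherent shift $b_k=a_k+\sqrt\alpha\,\overline{v(k)}/(\epsilon(k)-u\cdot k)$ is well defined. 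The shift turns the last two terms into $\int(\epsilon-u\cdot k)b^\dagger b-\alpha\int|v|^2/(\epsilon-u\cdot k)$, and the choice of $u$ cancels the classical field momentum so that $(P-P_f)-mu=(C_u+C_u^\dagger)-\tilde P_f$ is a pure fluctuation, with $C_u=\sqrt\alpha\int k\,(v/(\epsilon-u\cdot k))\,b_k\,dk$ and $\tilde P_f=\int k\,b_k^\dagger b_k\,dk$. The classical energy $u\cdot P-\tfrac m2u^2-\alpha\int|v|^2/(\epsilon-u\cdot k)$ is even in $u$, so it equals $-\alpha\|h\|^2+P^2/(2(m+\alpha\mpek))$ up to $O(P^4\alpha^{-3})$; the quadratic part $\tfrac1{2m}(C_u+C_u^\dagger)^2+\int(\epsilon-u\cdot k)b^\dagger b$ is bounded below by $\tfrac{d\omega}2$ minus the two Theorem~\ref{thm1} error terms by the same Lieb--Yamazaki commutator estimates, with an $O(P^2\alpha^{-3/2})$ correction from the (even) $u$-dependence of the oscillator frequency; and the cubic/quartic remainders, chiefly the cross term $\tfrac1{2m}\{(C_u+C_u^\dagger),\tilde P_f\}$, are controlled through the positivity $\int(\epsilon-u\cdot k)b^\dagger b\ge(c-|u|)|\tilde P_f|$—superfluidity again—yielding the remaining $O(|P|^3\alpha^{-2})$. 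Summing these gives \eqref{mass_lower}.

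The upper bound \eqref{mass_upper} is the variational counterpart and the easier direction: since $\mathbb{H}_P$ is a single fiber, any Fock vector bounds $E(P)$ from above, so I would test it on the quantized moving polaron—a coherent displacement by a momentum-carrying field profile dressed with the harmonic zero-point squeezing that already produced $\tfrac{d\omega}2$ in Theorem~\ref{thm1}. Its energy is $-\alpha\|h\|^2+\tfrac{d\omega}2+P^2/(2(m+\alpha\mpek))$ up to corrections collected in $O(|P|\alpha^{-1})$, which is \eqref{mass_upper}. No positivity of $\epsilon-u\cdot k$ is needed for an upper bound, so massiveness alone suffices; in the non-superfluid case the singular Doppler profile is simply replaced by its regular first-order-in-$u$ approximation, which still carries field momentum $\alpha\mpek u$ at energy cost $\tfrac{\alpha\mpek}2u^2$ to the required accuracy.

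The main obstacle is the lower bound, and specifically the need to extract the subleading $\tfrac{d\omega}2\sim\alpha^{1/2}$ and the small momentum term $\tfrac{P^2}{2\alpha\mpek}\sim P^2/\alpha$ simultaneously. The crude Legendre estimate $\tfrac1{2m}(P-P_f)^2\ge u\cdot(P-P_f)-\tfrac m2u^2$ would capture the momentum term painlessly but discards the kinetic fluctuations and hence loses $\tfrac{d\omega}2$; yet $\tfrac{d\omega}2$ must be retained so that it cancels against $E(0)$ and yields the sharp threshold $|P|\gg\alpha^{1/2}$ rather than the weaker $|P|\gg\alpha^{3/4}$. This forces one to keep the full square $\tfrac1{2m}[(P-P_f)-mu]^2$ and analyze the driven quadratic Hamiltonian exactly while the field legitimately carries momentum of order $P$, so $\tilde P_f$ cannot be discarded. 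Controlling $\tilde P_f$ uniformly up to $|P|\le C'\alpha$ is exactly what makes the superfluid bound $\int(\epsilon-u\cdot k)b^\dagger b\ge(c-|u|)|\tilde P_f|$ indispensable: without linear growth of $\epsilon$ a single high-momentum phonon could absorb momentum $\sim P$ at $O(1)$ energy cost, and the lower bound would collapse at large $P$.
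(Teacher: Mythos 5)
Your skeleton for the lower bound --- the decomposition \eqref{inr} with a velocity $u\sim P/(\alpha\mpek)$, the Doppler-shifted field energy made nonnegative by superfluidity, and the extraction of $P\cdot u-\alpha u^2\mpek/2$ --- is the paper's, and your accounting of which error dominates in which part of the window $\alpha^{1/2}\ll|P|\ll\alpha$ is correct. But the way you propose to extract the $\tfrac{d\omega}{2}$ term contains a genuine gap. After your coherent shift the linear interaction is gone, so your ``quadratic part'' $\tfrac1{2m}(C_u+C_u^\dagger)^2+\int(\epsilon-u\cdot k)b_k^\dagger b_k\,dk$ is a sum of two nonnegative operators, and the Lieb--Yamazaki inequality \eqref{cs} cannot be applied to it: the mechanism behind \eqref{cs} is the double commutator $[P_f,[\mathbb{V},P_f]]$, which is nonzero only because $P_f$ is field-\emph{quadratic}, whereas $[C_u+C_u^\dagger,[\,\cdot\,,C_u+C_u^\dagger]]$ vanishes for anything field-linear. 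If you instead reduce your quadratic part to the $d$ modes $k_i v/(\epsilon-u\cdot k)$ via Bessel, the resulting oscillator frequency is $\sqrt{2AE}$ with $A\propto\int k_1^2|v|^2\epsilon^{-2}$ and $E=\int k_1^2|v|^2\epsilon^{-2}/\int k_1^2|v|^2\epsilon^{-3}$, and Cauchy--Schwarz gives $2AE\le\omega^2$ with equality only for constant $\epsilon$: you lose a constant factor in the $\alpha^{1/2}$ term, which is precisely the loss that degrades the threshold to $|P|\gg\alpha^{3/4}$. Recovering the sharp constant from your decomposition would require a genuine Bogoliubov diagonalization, which you do not supply. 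The paper sidesteps this by applying \eqref{cs} to the \emph{full} square $\tfrac1{2m}(P_f-P+mu)^2$ before any shift, which replaces $v$ by $w_\lambda=(1-\lambda k^2)v$; the term $\tfrac{d\omega}{2}$ then appears as the classical cross term $2\lambda\alpha\|\nabla h_u\|^2$ at $\lambda=(2m\omega)^{-1}$, and only the $O(1)$ remainder $2m\lambda^2\alpha[P_f,\mathbb{V}]^2$ needs the crude Bessel bound. Relatedly, your cross term $\{C_u+C_u^\dagger,\tilde P_f\}$ is not an $O(|P|^3\alpha^{-2})$ object (both factors are $P$-independent fluctuation operators after the shift), and the first-moment bound $\int(\epsilon-u\cdot k)b^\dagger b\ge(c-|u|)|\tilde P_f|$ does not control an anticommutator; in the paper the $|P|^3\alpha^{-2}$ error comes from expanding $\|h_u\|^2$, $\|\Delta h_u\|^2$ and the Bessel constant in powers of $|u|/c$.

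The upper bound is where the paper's main technical work lies, and your proposal is essentially a placeholder there. Since $\mathbb{H}_P$ acts on Fock space alone, no product trial state is available, and a plain coherent state $|\varphi\rangle$ fails outright: $\langle\varphi|(P-P_f)^2|\varphi\rangle$ contains the field-momentum variance $\int p^2|\varphi(p)|^2\,dp\sim\alpha$, an error of order $\alpha/m\gg\alpha^{1/2}$ that destroys even the Theorem \ref{thm1} accuracy. The paper's trial state \eqref{state}, $|\phi_P\rangle=\psi(P-P_f)|\varphi\rangle$ with $\psi$ a Gaussian of width set by $\omega$ and $\varphi$ as in \eqref{choose:phi}, is the $P$-fiber of a product state; it is neither coherent nor Gaussian, it correlates all field modes, and proving that these correlations contribute only $O(|P|\alpha^{-1})$ requires the Laplace-type asymptotics of Lemma \ref{lemma_laplace} together with the term-by-term estimates of the norm and of the field, interaction and kinetic energies. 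Your phrase about a coherent displacement ``dressed with harmonic zero-point squeezing'' gestures at the right object but neither specifies it nor identifies why its energy is computable to the required accuracy; as written, the upper bound is not proved.
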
 
Note that \eqref{mass_upper} and \eqref{mass_lower} are non-zero momentum analogs of the bounds in Theorem \ref{thm1}. In combination, they readily imply \eqref{st1}. We conjecture that \eqref{st1} holds without the restriction that $|P|\gg \alpha^{1/2}$ (and hence, in particular, in the case when $P\to 0$ before $\alpha\to\infty$). 

Since the limit \eqref{limit} may at first sight appear artificial, let us briefly explain its origin. Theorem \ref{thm3} states that $E(P)-E(0)$ is, for large $\alpha$, and in a suitable window of momenta, asymptotically a parabolic curve with a coefficient determined by the semiclassical approximation. This can be regarded as a statement on the global curvature of $E(P)-E(0)$, in contrast to the \emph{local} curvature at $P=0$ described by the effective mass. 
The size of the window of momenta for which this asymptotic form holds depends on $\alpha$: the lower margin $|P|\gg \alpha^{1/2}$ ensures that we look at  $E(P)$ in the regime when the kinetic energy  of the center-of-mass motion is much larger than the $O(1)$ energetic error  determining the accuracy of our knowledge of the ground state energy, as expressed in Theorem \ref{thm1} and the bounds \eqref{mass_upper} and \eqref{mass_lower}. 
The upper margin  $|P|\ll \alpha$ is natural in view of the following discussion. Typically, one can expect that $E(P)$ has a parabolic shape for sufficiently small momenta, and this parabolic shape is in general lost when $E(P)$ approaches the bottom of the essential spectrum $E_{\mathrm{ess}}(P)$. 
There is a formula for the latter \cite{Moeller, Gerlach},
\begin{equation}
  E_{\mathrm{ess}}(P)=\inf_k \left(E(P-k)+\epsilon(k)\right),
\end{equation}
and in particular $E_{\rm ess}(P) \leq E(0) + \epsilon(P)$. Hence $E(P)\approx E(0) + \frac{P^2}{2 M_{\rm eff}}$ certainly ceases to be valid for $P^2 \gtrsim M_{\rm eff} \epsilon(P)$. Since $\epsilon(P) \geq c|P|$ by assumption, this is thus the case if $|P| \gtrsim M_{\rm eff} \sim \alpha$. 
 
 Theorem \ref{thm3} may be regarded as our principal novel contribution to the existing literature. Its proof utilizes, in particular,  a new trial state in order to obtain the upper bound \eqref{mass_upper}, which is essentially the extension of the very simple bound $E(0)\leq \inf_{\psi}\mathcal{E}^{\mathrm{Pek}}_\alpha(\psi)$ to non-zero momenta. The lower bound relies, on the other hand, on an extension of the techniques used in the lower bound of Theorem \ref{thm1}, and is thus also ultimately rooted in \cite{LY}. 
 
\bigskip

 In the remainder of the article we give the proofs of our results. The symbol $C$ denotes a positive constant, independent of $\alpha$ and $P$, whose exact value may change from one instance to the other. 

\section{Proofs}
\subsection{Proof of Theorem \ref{thm1}}\label{sec:21}

\subsubsection{Upper bound}
\begin{proof} 

For any normalized $\phi\in \mathcal{F}$, we have
\begin{equation}
  \langle \phi|a_k^{\dagger}a_k|\phi\rangle \geq |\langle\phi|a_k|\phi\rangle |^2 \qquad \forall k\in \mathbb{R}^d 
\end{equation} 
with equality if and only if $\phi$ is a coherent state, i.e., an eigenstate of all the $a_k$. Thus
 \begin{equation}\label{sem}
\inf_{\phi, \psi }\langle \psi \otimes \phi |\mathbb{H}| \psi\otimes \phi \rangle=\inf_{\varphi,\psi}\mathcal{H}(\psi,\varphi)
\end{equation} 
where $\mathcal{H}$ is the classical functional 
\begin{equation}
  \mathcal{H}(\psi,\varphi)=\frac{1}{2m}\int |\nabla \psi(x)|^2dx + 2\sqrt{\alpha}\mathfrak{Re} \int v(p)\varphi(p)\rho_\psi(p) dp +\int \epsilon(p)|\varphi(p)|^2 dp
\end{equation} 
with $\rho_\psi(p)=\int |\psi(x)|^2 e^{ip\cdot x}dx$.  Minimizing with respect to the field $\varphi$ and passing to position space, we obtain the \emph{Pekar functional}
\begin{equation}
  \mathcal{E}^{\rm{Pek}}_{\alpha}(\psi)=\frac{1}{2m}\int |\nabla\psi(x)|^2dx-\alpha\iint |\psi(x)|^2g(x-y)|\psi(y)|^2 dxdy,
\end{equation} with $g(x)=\int \frac{|v(k)|^2}{\epsilon(k)}e^{ik\cdot x}dk.$  
Since $\mathbb{H}$ is isotropic, $g(x)=\int \frac{|v(k)|^2}{\epsilon(k)}\cos(k\cdot x) dk$. By the elementary inequality $\cos x \geq 1-\frac{x^2}{2}$ we have \begin{equation}
  \mathcal{E}^{\rm{Pek}}_{\alpha}(\psi)\leq -\alpha\|h\|^2+\mathcal{L}_{\alpha}(\psi)
\end{equation}
with the functional 
\begin{align}
  \mathcal{L}_{\alpha}(\psi) & =\frac{1}{2m}\int |\nabla\psi(x)|^2 dx +\frac{\alpha \|\nabla h\|^2}{2d}\iint |\psi(x)|^2 (x-y)^2 |\psi(y)|^2 dx dy \nonumber \\ & = \frac{1}{2m}\int |\nabla\psi(x)|^2 dx +\frac{\alpha \|\nabla h\|^2}{d}\left(\int |\psi(x)|^2 x^2-\left(\int x|\psi(x)|^2 dx\right)^2\right).
\end{align}
It follows from the Heisenberg uncertainty principle  that the infimum of $\mathcal{L}_{\alpha}$ equals $d\omega/2$, with $\omega$ in \eqref{omega}. This leads to the claimed upper bound. \end{proof}
\subsubsection{Lower bound}
\begin{proof}
Our starting point is the inequality, valid for all $\lambda \in \mathbb{R}$ and all $R\in \mathbb{R}^d$, 
\begin{equation}\label{cs}
\sqrt{\alpha}  \lambda[P_f+R,[\mathbb{V},P_f+R]]\leq \frac{1}{2m}(P_f+R)^2-2m\lambda^2 \alpha [P_f+R,\mathbb{V}]^2
\end{equation} 
which can be easily proved using the Cauchy--Schwarz inequality (the minus sign on the right-hand side stems from the fact that $[P_f,\mathbb{V}]$ is anti-hermitian). We have the identity 
\begin{equation}
\lambda[P_f+R,[\mathbb{V},P_f+R]]= \lambda[P_f,[\mathbb{V},P_f] =  -\lambda\int k^2 \left(\overline{v(k)}a^{\dagger}_k+v(k)a_k\right)dk.
\end{equation} 
We  conclude that for all $P\in \mathbb{R}^d$, the operators $\mathbb{H}_P$ in \eqref{HP} are bounded below, uniformly in $P$, by \begin{equation}
  \mathbb{H}_P\geq \mathbb{H}'_{\lambda}:=\mathbb{F}+\sqrt{\alpha}\mathbb{W}_{\lambda}+2m\lambda^2\alpha [P_f,\mathbb{V}]^2
\end{equation}
with \begin{equation}
  \mathbb{W}_{\lambda}=\int (1-\lambda k^2)\left(v(k)a_k+\overline{v(k)}a^{\dagger}_k\right) dk.
\end{equation}
Given that $\mathbb{H}$ is unitarily equivalent to $\int_{\oplus} \mathbb{H}_P dP$, this clearly implies that
 \begin{equation}
  \inf \mathrm{spec}~ \mathbb{H} \geq \sup_{\lambda} ~\inf \mathrm{spec}~ \mathbb{H}'_{\lambda}.
\end{equation}
Let 
\begin{equation}\label{w}
w_\lambda(k) := (1-\lambda k^2) v(k)  
\end{equation} 
and observe that our assumptions on $v$ and $\epsilon$, i.e.,  $h\in W^{2,2}(\mathbb{R}^d)$ and $\epsilon$ massive,  imply that $w_\lambda\epsilon^{-1}\in L^2(\mathbb{R}^d)$. We may thus apply the unitary shift operator $U$ with the property that $U a_k U^{\dagger}=a_k-\sqrt{\alpha}\frac{\overline{w_\lambda(k)}}{\epsilon(k)}$ for all $k\in \mathbb{R}^d$ to $\mathbb{H}_{\lambda}$. We obtain 
\begin{equation}\label{wicht}
  U \left(\mathbb{F}+\sqrt{\alpha}\mathbb{W}_{\lambda}\right) U^{\dagger}=\mathbb{F}-\alpha\|h\|^2+2\lambda\alpha\|\nabla h\|^2-\lambda^2\alpha \|\Delta h\|^2 .
\end{equation}
Furthermore 
\begin{align} \nonumber 
U[P_f,\mathbb{V}] U ^{\dagger} & =\int k v(k)\left(a(k)-\alpha^{1/2}\overline{w_{\lambda}(k)}\epsilon(k)^{-1}\right)dk-\int k\left( \overline{v(k)}a^{\dagger}_k-\alpha^{1/2}w_{\lambda}(k)\epsilon(k)^{-1}\right)dk \\ &=[P_f,\mathbb{V}] 
  \label{pfv}
\end{align}
since $\int k v(k) \overline{w_{\lambda}(k)}\epsilon(k)^{-1} dk \in \mathbb{R}$.  We are thus left with providing a lower bound to the operator 
\begin{equation}\label{FPFV}
  \mathbb{F}+2m\lambda^2 \alpha [P_f,\mathbb{V}]^2. 
\end{equation}
Since $\eta\in W^{1,2}(\mathbb{R}^d)$ by assumption, we can introduce the bosonic operators, for $i=1,\dots,d$, 
\begin{equation}
  b_i=\frac{\sqrt{d}}{\|\nabla \eta\|}\int k_i v(k) a_k dk
\end{equation} 
with $[b_i,b_j^{\dagger}]=\delta_{ij}$. Then \begin{equation}
  [P_f,\mathbb{V}]^2=\frac{\|\nabla \eta\|^2}{d}\sum_{i=1}^d(b_i-b_i^{\dagger})^2.
\end{equation}
Let 
\begin{equation}\label{E}
  E= \frac{\|\nabla \eta\|^2}{\|\nabla h\|^2} .
\end{equation} 
We claim that 
 \begin{equation}
  \mathbb{F}\geq E\sum_{i=1}^db_i^{\dagger}b_i.
\end{equation} 
To prove this, it is enough to show that for every one-phonon vector $\Phi \in L^2(\mathbb{R}^d)$ 
\begin{equation}\label{want}
  \int \epsilon(k)|\Phi(k)|^2 dk\geq \frac{dE}{\|\nabla \eta\|^2}\sum_{i=1}^d\left| \int k_i \overline{v(k)}\Phi(k)dk \right|^2.
\end{equation} 
For any $\psi\in L^2(\mathbb{R}^d)$ and $d$ orthonormal functions $\phi_i$ we have by  Bessel's inequality \begin{equation}
  \int |\psi(k)|^2dk \geq \sum_{i=1}^d \left|\int \overline{\phi_i(k)}\psi(k)dk \right|^2.
\end{equation} Using this for $\psi(k)=\sqrt{\epsilon(k)}\Phi(k)$ and $\phi_i(k)=(d^{-1/2}\|\nabla h\|)^{-1}k_i \frac{v(k)}{\sqrt{\epsilon(k)}}$ yields \eqref{want}. Moreover, since 
 \begin{equation}
  (b_i-b_i^{\dagger})^2=(b_i+b_i^{\dagger})^2-4b^{\dagger}_ib_i-2\geq -4b^{\dagger}_ib_i-2
\end{equation}
we conclude that \eqref{FPFV} is bounded below by $-4m\lambda^2\alpha\|\nabla \eta\|^2$ provided that \begin{equation}
  |\lambda|\leq \lambda_0=\sqrt{\frac{d}{8m\alpha \|\nabla h\|^2}}=\frac{1}{2m\omega}.
\end{equation} 
In particular, 
\begin{equation}
 \inf \mathrm{spec}~ \mathbb{H} \geq  -\alpha\|h\|^2+   \alpha \sup_{|\lambda|\leq \lambda_0} \left( 2\lambda\|\nabla h\|^2-\lambda^2 \|\Delta h\|^2  -4m\lambda^2\|\nabla \eta\|^2 \right) .
 \end{equation} 

The choice $\lambda= \lambda_0$ yields the lower  bound  in \eqref{bd}. This choice for $\lambda$ is in fact optimal as long as 
\begin{equation}
 \alpha\geq  \alpha_m:=\frac{d}{8m}\left(\frac{4m\|\nabla \eta\|^2+\|\Delta h\|^2}{\|\nabla h\|^3}\right)^2.
\end{equation}
For $\alpha<\alpha_m$, the optimal choice of $\lambda$ is rather $\lambda = \|\nabla h\|^2 ( 4m\|\nabla \eta\|^2+\|\Delta h\|^2)^{-1}$, which yields the improved lower bound mentioned in Remark \ref{rem13} in this case.
\end{proof}

\subsection{Proof of Theorem \ref{thm2}}

\begin{proof} 
Under the stated assumptions on $v$ and $\epsilon$, there exists an isolated eigenvalue at the bottom of the spectrum of $\mathbb{H}_0$, and a unique corresponding ground state $\phi_0$ \cite{Moeller}. 
Using second-order perturbation theory and rotation invariance, one arrives at the formula \begin{equation}
  \frac{1}{2M_{\text{eff}}}=\frac{1}{2m}-\frac{1}{dm^2}\langle \phi_0 |P_f\frac{1}{\mathbb{H}_0-E_0}P_f|\phi_0\rangle
\end{equation}
for the effective mass $M_{\rm eff}$ defined in \eqref{def:Meff}.
 Note that $\langle \phi_0 |P_f| \phi_0\rangle =0$, hence $QP_f|\phi_0\rangle=P_f|\phi_0\rangle$, where $Q$ is the projection onto the orthogonal complement of the ground state of $\mathbb{H}_0$, and $\mathbb{H}_0 - E_0$ is strictly positive and invertible on the range of $Q$. Therefore, by the Cauchy-Schwarz inequality, 
\begin{equation}\label{massbd}
  \frac{m}{M_{\text{eff}}}\leq 1 - \frac{2}{dm} \frac{ \langle \phi_0 | P_f^2|\phi_0\rangle^2 } { \langle \phi_0 | P_f (\mathbb{H}_0-E_0) P_f|\phi_0\rangle} . 
\end{equation} 
 We exploit the fact that $\phi_0$ is the ground state of $\mathbb{H}_0$ and arrive at the identity 
\begin{equation}
  \langle \phi_0 | P_f (\mathbb{H}_0-E_0) P_f|\phi_0\rangle=\frac{1}{2}\langle\phi_0| [P_f,[\mathbb{H}_0,P_f]]|\phi_0\rangle. 
\end{equation} 
A simple computation shows that the double commutator  equals 
\begin{equation}\label{ww}
  \frac{1}{2}[P_f,[\mathbb{H}_0,P_f]]=\mathbb{W}:=-\frac{\sqrt{\alpha}}{2}\int k^2 \left(v(k)a_k+\overline{v(k)}a_k^{\dagger}\right)dk.
\end{equation} 
Define for $\lambda>-\frac{1}{2m}$ and  $\mu\in \mathbb{R}$  \begin{equation}
  \mathbb{H}(\lambda,\mu):=\mathbb{H}_0+\lambda P_f^2+\mu \mathbb{W}, \quad E_0(\lambda,\mu)=\inf \mathrm{spec}~\mathbb{H}(\lambda,\mu).
\end{equation} By the variational principle, \begin{equation}\label{var}
E_0(\lambda,\mu)\leq E_0+\lambda \langle \phi_0 | P_f^2|\phi_0\rangle+\mu \langle \phi_0 |\mathbb{W}|\phi_0\rangle. \end{equation} We have the lower bound \begin{equation}
  E_0(0,\mu)\geq -\alpha\int \left(1-\frac{\mu k^2}{2}\right)^2\frac{|v(k)|^2}{\epsilon(k)}dk
\end{equation} 
which is finite because of our assumption $h \in W^{2,2}(\mathbb{R}^d)$. Combining the last two inequalities with the upper bound on $E_0$ from Theorem \ref{thm1}, we  conclude that for all negative $\mu$ 
\begin{equation}
  \langle \phi_0 |\mathbb{W}|\phi_0\rangle \leq \frac{dm\omega^2}{2} - \alpha\frac{\mu}{4}\int k^4 \frac{|v(k)|^2}{\epsilon(k)}dk - \mu^{-1}\frac{d\omega}{2}.
\end{equation} 
Optimizing over $\mu<0$ yields the bound
\begin{equation}\label{w_t2}
 \langle \phi_0 |\mathbb{W}|\phi_0\rangle \leq   \frac{dm\omega^{2}}{2} \left( 1+C\alpha^{-1/4} \right) 
\end{equation}
for $C= \|\Delta h\| m^{-1/4} (d/2)^{1/4} \|\nabla h\|^{-3/2}$. 

In a similar fashion we conclude  from \eqref{var} that for all $\lambda>-\frac{1}{2m}$ \begin{equation}\label{p1}
\lambda  \langle \phi_0 | P_f^2|\phi_0\rangle \geq E_0(\lambda,0)-E_0 \geq E_0(\lambda,0)+\alpha\int \frac{|v(k)|^2}{\epsilon(k)}dk-\frac{d\omega}{2}
\end{equation} where we again used the upper bound from Theorem \ref{thm1}. Now, a lower bound on $E_0(\lambda,0)$ is provided by Theorem \ref{thm1} for a particle with mass $\frac{m}{(1+2\lambda m)}$ :
\begin{equation}
  E_0(\lambda,0)\geq -\alpha \int \frac{|v(k)|^2}{\epsilon(k)}dk +\frac{d\omega}{2}\sqrt{1+2\lambda m}-\frac{d}{2} \frac{\|\nabla \eta\|^2}{\|\nabla h\|^2} -\frac{d(1+2\lambda m)}{8m} \frac{\|\Delta h\|^2}{\|\nabla h\|^2} .
\end{equation}
In particular, for any $\lambda>0$,
\begin{equation}\label{p4}
  \langle \phi_0 | P_f^2|\phi_0\rangle \geq \frac{dm\omega}{2}+\left(\frac{d\omega}{2}\left(\frac{\sqrt{1+2m\lambda}-1}{\lambda}-m\right)\right)-C\frac{1+\lambda}{\lambda}
\end{equation} 
for suitable $C>0$. For small $\lambda$, the second term on the right-hand side behaves like $\omega\lambda$, hence the optimal choice of $\lambda$ is of the order $\omega^{-1/2} \sim \alpha^{-1/4}$, and we arrive at the bound
\begin{equation}\label{p2}
  \langle \phi_0 | P_f^2|\phi_0\rangle \geq \frac{dm\omega}{2} \left( 1 - C \alpha^{-1/4} \right)  \qquad \text{for $\alpha\gg 1$}.
\end{equation}
Combining \eqref{massbd}, \eqref{w_t2} and \eqref{p2}, we arrive at the claimed lower bound on the effective mass.  \end{proof}

\subsection{Proof of Theorem \ref{thm3}}
\subsubsection{Lower bound}

\begin{proof}
Recall our assumption $\epsilon(k) \geq c |k|$ for some $c>0$. Pick $u\in \mathbb{R}^d$ with $|u|<c$, and write
\begin{equation}\label{inr}
  \mathbb{H}_P= P \cdot u - \frac m2 u^2 + \frac{(P_f-P + mu )^2}{2m}+\tilde{\mathbb{F}}_u+\sqrt{\alpha}\mathbb{V}
\end{equation} 
where 
\begin{equation}\label{FR}
 \tilde{\mathbb{F}}_u=\int \left(\epsilon(k) - u\cdot k \right) a^{\dagger}_k a_k dk. 
\end{equation}
We proceed as in the proof of Theorem~\ref{thm1} and use \eqref{cs}, this time for $R=P-mu$. This gives the lower bound
\begin{align}\nonumber 
  E(P)& \geq  P \cdot u - \frac m2 u^2 -\alpha \|h_u\|^2 \\ & \quad + \sup_{\lambda \in \mathbb{R}} \left[ \inf\mathrm{spec} \lbrace\tilde{\mathbb{F}}_u+2m\alpha \lambda^2 [P_f,\mathbb{V}]^2\rbrace+ 2\lambda\alpha\|\nabla h_u\|^2-\lambda^2\alpha \|\Delta h_u\|^2 \right] \label{Ep}
\end{align} 
with
\begin{equation}
h_u(x)=\frac{1}{(2\pi)^{d/2}}\int \frac{v(k)}{\sqrt{\epsilon(k)-u\cdot k}}e^{ik\cdot x}dk .
\end{equation} 
Without loss of generality, we can assume that $P=|P|e_1$, where $e_1$ is the unit vector pointing in the first coordinate direction, and we shall also pick $u$ to point along $e_1$. The functions 
\begin{equation}
  \phi_i^u(k)=\frac{k_i v(k)}{\sqrt{\epsilon(k)-u\cdot k}}=\frac{k_i v(k)}{\sqrt{\epsilon(k)- |u| k_1}}
\end{equation} 
are then orthogonal, and 
\begin{align}\nonumber
 \|\phi_i^u\|^2 & = \frac{1}{d}\int \frac{k^2|v(k)|^2}{\epsilon(k)}dk+ u^2 \int \frac{k_i^2k_1^2|v(k)|^2}{\epsilon(k)^2(\epsilon(k)- |u| k_1)}dk \\ & \leq\frac{1}{d}\int \frac{k^2|v(k)|^2}{\epsilon(k)}dk \left(1+ \frac{u^2}{c(c- |u|)} \right) 
 \end{align} 
 where we used $\epsilon(k)\geq c|k|\geq c|k_1|$  in the last step. 
 Bessel's inequality 
 \begin{equation}
 \int |\psi(k)|^2dk\geq \sum_{i=1}^d \left|\frac{1}{\|\phi^u_i\|}\int \overline{\phi^u_i(k)}\psi(k)dk \right|^2
\end{equation}
thus yields 
\begin{equation}
  \tilde{\mathbb{F}}_u\geq \left(1+ \frac{u^2}{c(c- |u|)} \right)^{-1} E\sum_{i=1}^d b_i^{\dagger}b_i
\end{equation} 
with $E$ defined  in \eqref{E}. 
Arguing as in the proof of the lower bound in Theorem \ref{thm1}, we conclude that 
\begin{equation}
\inf\mathrm{spec} \lbrace\tilde{\mathbb{F}}_u+2m\alpha \lambda^2 [P_f,\mathbb{V}]^2\rbrace \geq -4m\lambda^2\alpha\|\nabla \eta\|^2
\end{equation}
as long as 
\begin{equation}\label{lam}
|\lambda| \leq \left(1+ \frac{u^2}{c(c- |u|)} \right)^{-1/2}\frac{1}{2m\omega}  .
\end{equation}
We choose the maximally allowed value of $\lambda$ (i.e., equality in \eqref{lam}), and arrive at the lower bound 
\begin{align}\nonumber 
  E(P)& \geq  P \cdot u - \frac m2 u^2 -\alpha \|h_u\|^2 \\ & \quad  +  \left(1+ \frac{u^2}{c(c- |u|)} \right)^{-1/2} \sqrt{ \frac{\alpha d}{2m}} \|\nabla h_u\|^2  -  \frac{d}{2} \frac{ \|\nabla \eta\|^2}{\|\nabla h\|^2}  -  \frac{d}{8m} \frac{ \|\Delta h_u\|^2}{\|\nabla h\|^2} . \label{Ep2}
\end{align} 
We are left with estimating the norms appearing in \eqref{Ep2}. We have 
 \begin{equation}
 \|h_u\|^2=\int\frac{|v(k)|^2}{\epsilon(k)}\left( 1 + \frac{ (u\cdot k)^2 }{ \epsilon(k) \left( \epsilon(k) - u\cdot k\right)} \right) dk \leq \| h\|^2 + u^2 \frac{M^{\rm Pek}}{2 ( 1- |u|/c)}
\end{equation}
where we used the definition of $M^{\rm Pek}$ in \eqref{def:MPek} and $\epsilon(k)\geq c|k|$. Similarly,
\begin{equation}
 \| \Delta h_u\|^2=\int\frac{|k|^4 |v(k)|^2}{\epsilon(k)}\left( 1 + \frac{ (u\cdot k)^2 }{ \epsilon(k) \left( \epsilon(k) - u\cdot k\right)} \right) dk  \leq \| \Delta h\|^2 \left(1+ \frac{u^2}{c(c- |u|)} \right) . 
\end{equation}
 For the remaining term, we simply bound 
\begin{equation}
 \|\nabla h_u\|^2=\int\frac{k^2 |v(k)|^2}{\epsilon(k)}\left( 1 + \frac{ (u\cdot k)^2 }{ \epsilon(k) \left( \epsilon(k) - u\cdot k\right)} \right) dk \geq \| \nabla h\|^2  . 
\end{equation}
We are still free to choose $u$ (subject to the constraint $|u|<c$) and the leading terms to optimize are simply $P\cdot u - \alpha u^2 M^{\rm Pek}/2$. We therefore choose
\begin{equation}
u = \frac P { \alpha M^{\rm Pek}}
\end{equation}
which yields the desired bound \eqref{mass_lower}. 
\end{proof}

\begin{rem}
By choosing $u$ simply $O(1)$, the bound \eqref{Ep2} implies that  
under the same assumptions on $v$ and $\epsilon$, there exist $\gamma>0$ and $F_{\alpha}\in \mathbb{R}$ such that for all $P\in \mathbb{R}^d$, 
\begin{equation}
  E(P)\geq \gamma |P|+F_{\alpha}.
\end{equation} 
From this and from the analyticity of $E(P)$ in a neighborhood of its global minimum at $P=0$ one can deduce that there exists a $P^*> 0$ such that 
\begin{equation}\label{co_hull}
  E(P)=E^*(P), \quad \forall P: 0\leq |P|\leq P^*,
\end{equation} 
where $E^*$ denotes the convex envelope of $E$, i.e., the largest convex function not exceeding $E$. One can verify that \begin{equation}
  E^*(P)=\sup_{s\in\mathbb{R}^d}\left(sP+\inf_{Q\in \mathbb{R}^d}\left(E(Q)-sQ\right)\right);
\end{equation}
 using the Lee-Low-Pines transformation, this can be cast into the form
\begin{equation}\label{estar}
  E^*(P)=\sup_{s \in \mathbb{R}^d} \left(s \cdot P+\inf \mathrm{spec}~ \left(\mathbb{H}-s \cdot P_{\rm tot}\right)\right)
\end{equation} 
where $P_{\rm tot}=-{i}\nabla_x+P_f$. 
In particular, for any $ \varphi \in L^2(\mathbb{R}^d)$ and any $L^2$-normalized $\psi\in H^1(\mathbb{R}^d)$  we have $E^*(P)\leq \sup_{s} \mathcal{H}_P(\psi,\varphi,s)$, where $\mathcal{H}_P(\psi,\varphi,s)$ is defined in \eqref{hfr}. Choosing 
\begin{equation}\label{funcs}
\psi=\psi^{\mathrm{Pek}}_{\alpha}, \quad \varphi(p)=\varphi_{\alpha}^{\mathrm{Pek}}(p) \left(1+\frac{p\cdot P}{\epsilon(p)M^{\mathrm{Pek}}_{\alpha}}\right)
\end{equation} one easily arrives at 
\begin{equation}\label{from-cov_hull0}
 E^*(P)\leq E^{\mathrm{Pek}}+\frac{P^2}{2M^{\mathrm{Pek}}_{\alpha}} \qquad  \forall P \in \mathbb{R}^d ,
\end{equation} 
and hence 
\begin{equation}\label{from-cov_hull}
 E(P)\leq E^{\mathrm{Pek}}+\frac{P^2}{2M^{\mathrm{Pek}}_{\alpha}} \qquad  \text{$\forall P$ with $|P|\leq P^*$} ,
\end{equation} 
where $E^{\mathrm{Pek}}$ is the infimum of the Pekar functional \eqref{Pek}, and $M^{\mathrm{Pek}}_{\alpha}$ is given by the Pekar mass formula \eqref{pekmass}. If we knew that $|P^*|\sim\alpha$, we could already deduce the main statement of Theorem \ref{thm3}, Eq. \eqref{st1}. Without this knowledge, we need to find an upper bound directly on  $E(P)$  by using an appropriate trial state for $\mathbb{H}_P$, with is the topic of the next section. 
The resulting bound holds for all $|P|\lesssim \alpha$, and is hence sufficient for our purpose. Let us also emphasize that for the upper bound in \eqref{from-cov_hull} via the equality \eqref{co_hull} the superfluid property of $\epsilon$ is crucial. In fact, for a constant dispersion relation (and hence, in particular, in the Fr\"ohlich case) $E^*(P)\equiv E(0)$ and hence $P^* = 0$. On the other hand, the proof of the upper bound that we shall now give holds for all regular polaron Hamiltonians, without the restriction that $\epsilon$ be superfluid.

\end{rem}

\subsubsection{Upper bound}

\begin{proof}
\textit{\underline{The trial state}}:
Let $\psi\in L^2(\mathbb{R}^d)$ be real-valued, with Fourier transform in $H^1(\mathbb{R}^d)$, and let $\varphi\in L^2(\mathbb{R}^d)$. We denote by $|\varphi\rangle$  the coherent state corresponding to $\varphi$, satisfying $a_k |\varphi\rangle = \varphi(k) |\varphi\rangle$ for all $k\in \mathbb{R}^d$. Explicitly, $|\varphi\rangle=e^{a^{\dagger}(\varphi)-a(\varphi)}|\Omega\rangle$ with $|\Omega\rangle$ -- the vacuum on $\mathcal{F}$. We choose a trial state (on $\mathcal{F}$) of the form (comp. \cite{nagy,liebse_equi_mass}) \begin{equation}\label{state}
  |\phi_P\rangle=\psi(P-P_f)|\varphi\rangle.
\end{equation} 
This state corresponds to the $P$-momentum fiber of the product state $\psi\otimes| \varphi\rangle$. It appears that this particular form of a trial state for $\mathbb{H}_P$ was first considered, for the case $P=0$, by Nagy \cite{nagy}, who in this way obtained the bound $E(0)\leq E^{\mathrm{Pek}}$ directly on $\mathcal{F}$. This form is also behind the intuition of the trial state in \cite{liebse_equi_mass}, where its linearized version is considered. In these cases $\psi$ and $\varphi$ were chosen to be the momentum space minimizers of the Pekar functional. We shall rather choose functions related to the ones mentioned in the preceding Remark, i.e., \eqref{funcs}, in particular $\varphi$ will have an additional explicit $P$-dependence. Thanks to the regularity, we can sligthly simplify their form using the intuition from Theorem \ref{thm1}, which facilitates the computations. Note that \eqref{state} induces non-trivial correlations between different modes of the field, in contrast to the full product state. One of the main points of the analysis below is to show that these correlations lead to subleading corrections to the desired energy expression, which naturally appears for our choice of $\psi$ and $\varphi$. We proceed with the details and start by rewriting the expected value of the energy and the norm of our trial state in a suitable way.

\underline{\emph{Preliminary computations}}:
We have the identity \begin{equation}
   a_p\psi(P-P_f)=\psi(P-p-P_f)a_p
 \end{equation} whence we deduce the relations 
\begin{equation}
\psi(P-P_f)\mathbb{V}\psi(P-P_f)= \int \dd p ~v(p) \psi(P-P_f-p)\psi(P-P_f)a_p+ \text{h.c.}
\end{equation}
as well as 
\begin{equation}
  \begin{split}
  \psi(P-P_f)\mathbb{F}\psi(P-P_f)=\int \epsilon(p) a^{\dagger}_p \psi(P-P_f-p)^2 a_p \dd p.  
  \end{split}
\end{equation}
Consequently
\begin{align} \nonumber 
 \langle \phi_P|\mathbb{H}_P |\phi_P\rangle &=  \frac{1}{2m}\langle \varphi|(P-P_f)^2\psi(P-P_f)^2|\varphi \rangle +\int \dd p ~ \epsilon(p) |\varphi(p)|^2 \langle \varphi | \psi(P-P_f-p)^2|\varphi \rangle \\ &\quad + 2\sqrt{\alpha}\mathfrak{Re}\int \dd p~ v(p)\varphi(p) \langle \varphi | \psi(P-P_f-p)\psi(P-P_f)|\varphi\rangle .
\label{exp_values}
\end{align}
Define 
\begin{equation}
  \GG(R)=\langle \varphi|\psi(R-P_f)^2|\varphi\rangle.
\end{equation}
In particular, $\langle \phi_P|\phi_P\rangle=\GG(P)$. Using the properties of the Weyl operator $e^{a^{\dagger}(\varphi)-a(\varphi)}$, we compute
\begin{equation}
  \langle \varphi |e^{-ix \cdot P_f}|\varphi\rangle =\exp\left(\int |\varphi(p)|^2(e^{-ip\cdot x}-1)\dd p\right)
\end{equation} and obtain \begin{equation}\label{gg}
  \GG(R)=\frac{1}{(2\pi)^d}\int \dd x ~ \rho_{\psi}(x)e^{F(x)-F(0)+iR\cdot x}
\end{equation} where \begin{equation}
  \rho_{\psi}(x)=\int |\psi(k)|^2 e^{-ik\cdot x}~\dd k 
\end{equation}
and \begin{equation}
  F(x):=\rho_{\varphi}(x)=\int |\varphi(p)|^2 e^{-ip\cdot x} \dd p.
\end{equation}
In a similar fashion, we obtain \begin{equation}
  G^{(2)}_{\psi,\varphi}(R,S):= \langle \varphi | \psi(R-P_f)\psi(S-P_f)|\varphi\rangle=\frac{1}{(2\pi)^d}\int \rho^{(2)}_{\psi}(x;R-S)e^{iR\cdot x}e^{F(x)-F(0)}\dd x
\end{equation}
with 
\begin{equation}
  \rho^{2}_{\psi}(x;y)=\int \psi(k)\psi(k-y)e^{-ik\cdot x}\dd k.
\end{equation}
Finally,
\begin{equation}\label{def:T}
 \frac{1}{2m}\langle \varphi| (P-P_f)^2\psi(P-P_f)^2|\varphi\rangle = \frac{1}{(2\pi)^d}\int \tau_{\psi}(x) e^{iP\cdot x}e^{F(x)-F(0)}\dd x
\end{equation} 
where
\begin{equation}\label{tau_def}
  \tau_{\psi}(x)=\frac{1}{2m}\int k^2 \psi(k)^2e^{-ik\cdot x}\dd k. 
\end{equation} 
We shall now specify our choice of $\psi$ and $\varphi$. We choose 
\begin{equation}
  \psi(k)=e^{-\frac{k^2}{2m\omega}}
\end{equation} where $\omega$ is defined in \eqref{omega}. With this choice of $\psi$, we have 
\begin{equation}\label{rho2_ev}
      \rho^{2}_{\psi}(x;p)= \left(m \pi \omega \right)^{d/2}e^{-\frac{1}{4m\omega}p^2}e^{-\frac{ip\cdot x}{2}}e^{-\frac{m\omega}{4}x^2}, 
\end{equation}
 \begin{equation}\label{rho_ev}
        \rho_{\psi}(x)=\rho^{2}_{\psi}(x;0)= \left(m\pi \omega\right)^{d/2}e^{-\frac{m\omega}{4}x^2},
      \end{equation}
and 
    \begin{equation}\label{tau_ev}
      \tau_{\psi}(x)=\left(m\pi \omega \right)^{d/2}\left(\frac{d\omega}{4}-\frac{m\omega^2}{8}x^2\right)e^{-\frac{m\omega}{4}x^2}.
    \end{equation}
For $\varphi$, we choose 
\begin{equation}\label{choose:phi}
  \varphi(p)=-\frac{\sqrt{\alpha}\overline{v(p)}}{\epsilon(p)}\left(1+\frac{p\cdot P}{\alpha \mpek \epsilon(p)}\right). 
\end{equation}
In particular, by the definition of $\mpek$ in \eqref{def:MPek},
\begin{equation}\label{PP}
  \int p |\varphi(p)|^2  \dd p =P
\end{equation}
and 
\begin{equation}\label{epsP}
  \int \epsilon(p)|\varphi(p)|^2 dp =\alpha\int \frac{|v(p)|^2}{\epsilon(p)}\dd p+\frac{P^2}{2\alpha\mpek}.
\end{equation} 
Furthermore, for this choice of $\varphi$ we have 
\begin{equation}
  \mathfrak{Re}F(x)=\alpha J(x)+\frac{1}{\alpha}K_P(x)
\end{equation}
where 
\begin{equation}\label{J}
  J(x):=\int \frac{|v(p)|^2}{\epsilon(p)^2}\cos (p\cdot x)\, \dd p
\end{equation}
and 
\begin{equation}\label{K}
  K_P(x):=\frac{1}{\left(\mpek\right)^2}\int \frac{|v(p)|^2}{\epsilon(p)^4}(P\cdot p)^2\cos (p\cdot x) \,\dd p
\end{equation}
as well as 
\begin{equation}
  \mathfrak{Im}F(x)=-P\cdot x-\frac{2}{\mpek}\int \frac{|v(p)|^2}{\epsilon(p)^3}(p\cdot P) \left(\sin (p\cdot x) -p\cdot x\right) \dd p \equiv -P\cdot x+A(x)
\end{equation}
where we used \eqref{PP}. As a consequence, \begin{equation}
\begin{split}
  \GG(R)&=N_\alpha \int e^{-\frac{m}{4}\omega x^2}e^{\mathfrak{Re}F(x)}e^{i((R-P)\cdot x+A(x))}\,\dd x\\ &=N_{\alpha} \int e^{-\frac{m}{4}\omega x^2}e^{\mathfrak{Re}F(x)}\cos\left((R-P)\cdot x+A(x)\right)\, \dd x
  \end{split}
\end{equation} where \begin{equation}
  N_{\alpha}:=\frac{1}{(2\pi)^d}\left(m \pi \omega \right)^{d/2}e^{-F(0)}.
\end{equation}
We finally evaluate
\begin{equation}
  G^{(2)}_{\psi,\varphi}(R,S)=N_{\alpha}\int e^{-\frac{1}{2\omega}(R-S)^2}e^{-\frac{m\omega}{4}x^2}e^{\mathfrak{Re}F(x)}\cos\left(A(x)+(R-P)\cdot x-\frac{(R-S)\cdot x}{2}\right)\dd x.
\end{equation}
In the next step, we perform an asymptotic analysis of the integrals appearing in the definitions of $G,G^{(2)}$ for large values of $\alpha$. 

\bigskip
\underline{\emph{Estimation of the weight integrals}}:
Let 
\begin{equation}\label{def:I}
     I:=N_\alpha \int e^{-\frac{m}{4}\omega x^2}e^{\mathfrak{Re}F(x)}~\dd x;
  \end{equation}
 since $\mathfrak{Re}F(x)\leq \mathfrak{Re}F(0)$, this integral is well-defined. We can hence introduce the probability measure \begin{equation}
   m(x)\dd x =\frac{N_{\alpha }e^{-\frac{m}{4}\omega x^2}e^{\mathfrak{Re}F(x)}}{I}\dd x.
 \end{equation}
Note that all moments of this distribution exist. We denote the expectation value with respect to this distribution by $\langle \cdot \rangle$, which should not be confused with the usual Dirac notation also employed here. The following lemma shows that $m(x)$ is essentially a Gaussian distibution with effective support on a lengthscale $x\sim \alpha^{-1/2}$ dictated by the $F(x)$.
\begin{lemma}\label{lemma_laplace}
For all $r\geq 0$ there exist positive constants $C^{(r)}_1,C^{(r)}_2$ such that for all $\alpha$ large enough and all $P$ with $|P| / \alpha$  small enough we have 
\begin{equation}\label{integral}
  C^{(r)}_1\alpha^{-(r+d)/2}\leq e^{-F(0)}\int |x|^r e^{-\frac{m}{4}\omega x^2}e^{\mathfrak{Re}F(x)}\dd x\leq C^{(r)}_2 \alpha^{-(r+d)/2}.
\end{equation}
\end{lemma}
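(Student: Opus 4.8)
The plan is to treat the left-hand side of \eqref{integral} as a Laplace-type integral dominated by the behaviour of $\mathfrak{Re}F$ near its maximum at $x=0$. Writing the factor $e^{-F(0)}$ explicitly, the integrand equals
\[
|x|^r\, e^{-\frac m4\omega x^2}\, e^{\alpha(J(x)-J(0))}\, e^{\frac1\alpha(K_P(x)-K_P(0))},
\]
with $J,K_P$ from \eqref{J}, \eqref{K}. Since $\cos(p\cdot x)\le 1$, both differences $J(x)-J(0)$ and $K_P(x)-K_P(0)$ are $\le 0$. Because $\epsilon$ is massive, the moments $\int \frac{p^2|v|^2}{\epsilon^2}\,\dd p\le(\inf\epsilon)^{-2}\|\nabla\eta\|^2$ and $\int\frac{p^4|v|^2}{\epsilon^4}\,\dd p\le(\inf\epsilon)^{-3}\|\Delta h\|^2$ are finite, so $J$ and $K_P$ are $C^2$ and a Taylor expansion gives
\[
J(0)-J(x)=\tfrac a2\,x^2+o(x^2),\qquad a:=\tfrac1d\int\tfrac{p^2|v(p)|^2}{\epsilon(p)^2}\,\dd p>0,
\]
together with the global bound $0\le K_P(0)-K_P(x)\le C|P|^2 x^2$ obtained from $1-\cos\theta\le\theta^2/2$. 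The structural point is that $\alpha(J(x)-J(0))$ is the dominant term: it produces an effective Gaussian of width $x\sim\alpha^{-1/2}$, while the explicit factor $e^{-\frac m4\omega x^2}$ has the broader width $\omega^{-1/2}\sim\alpha^{-1/4}$ and the $K_P$-term is a perturbation controlled by $|P|/\alpha$.

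For the upper bound I would split the integral at a fixed radius $\delta$. On $\{|x|\le\delta\}$ I use $e^{-\frac m4\omega x^2}\le 1$, $e^{\frac1\alpha(K_P(x)-K_P(0))}\le 1$, and the quadratic lower bound $J(0)-J(x)\ge\frac a4 x^2$ valid on the ball for $\delta$ small (by continuity of $-\mathrm{Hess}\,J$, which equals $a\,\mathrm{Id}$ at the origin), so that the integrand is bounded by $|x|^r e^{-\frac{a\alpha}4 x^2}$; the resulting Gaussian integral over $\mathbb{R}^d$ is exactly $C\alpha^{-(r+d)/2}$. On $\{|x|>\delta\}$, Riemann--Lebesgue applied to $\frac{|v|^2}{\epsilon^2}\in L^1$ together with continuity yields a uniform gap $\sup_{|x|>\delta}(J(x)-J(0))=-\kappa<0$, so that part is at most $e^{-\alpha\kappa}\int|x|^r e^{-\frac m4\omega x^2}\,\dd x\lesssim e^{-\alpha\kappa}\omega^{-(r+d)/2}$, which is superexponentially small relative to $\alpha^{-(r+d)/2}$. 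This gives the upper inequality with some $C_2^{(r)}$.

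For the lower bound I would instead restrict to the shrinking ball $\{|x|\le\delta\alpha^{-1/2}\}$, on which each of the three exponential factors is bounded below by a positive constant uniformly in $\alpha$ and in $P$ with $|P|/\alpha$ small: indeed $\frac m4\omega x^2\le C\delta^2\alpha^{-1/2}\to0$; the quadratic upper bound $J(0)-J(x)\le C x^2$ gives $\alpha(J(0)-J(x))\le C\delta^2$; and $\frac1\alpha(K_P(0)-K_P(x))\le C\frac{|P|^2}\alpha x^2\le C\delta^2\frac{|P|^2}{\alpha^2}\to0$. Hence the integrand is $\ge c_0|x|^r$ on this ball, and $\int_{|x|\le\delta\alpha^{-1/2}}|x|^r\,\dd x=C\,\alpha^{-(r+d)/2}$ yields the lower inequality with some $C_1^{(r)}$.

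The only part requiring genuine care, and thus the main obstacle, is establishing the two-sided quadratic control of $J$ near the origin (nondegeneracy $a>0$ and the matching upper bound $J(0)-J(x)\le C x^2$) together with the uniform gap on $\{|x|>\delta\}$, and then checking that all the constants can be chosen uniformly in $P$ throughout the admissible range $|P|/\alpha\to0$. The required integrability of the relevant moments, hence the finiteness of the Hessians of $J$ and $K_P$, follows from $\epsilon$ being massive together with $h\in W^{2,2}$ and $\eta\in W^{1,2}$; the $P$-uniformity is then automatic, since the only $P$-dependent term $\frac1\alpha K_P$ is quadratic in $P$ and enters with the small prefactor $|P|^2/\alpha^2$ after the rescaling $x\sim\alpha^{-1/2}$.
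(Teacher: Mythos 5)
Your argument is correct and follows essentially the same route as the paper's: a Laplace-type analysis splitting at a fixed radius $\delta$, with two-sided quadratic control of $J$ near the origin, the Riemann--Lebesgue lemma giving a uniform gap $J(x)\le J(0)-\kappa$ on $\{|x|>\delta\}$, and the observation that $K_P$ contributes only an $O(|P|^2/\alpha^2)$ perturbation. The only (inessential) differences are that the paper obtains the quadratic lower bound on $J(0)-J(x)$ from the explicit inequality $\cos t\le 1-\tfrac12 t^2+\tfrac1{24}t^4$ rather than continuity of the Hessian, and proves the lower bound by integrating a global Gaussian minorant over all of $\mathbb{R}^d$ rather than restricting to a ball of radius $\sim\alpha^{-1/2}$.
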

\begin{proof}
As $\cos x \leq 1-\frac{1}{2}x^2+\frac{1}{24}x^4$, we clearly have, with $J$ defined in \eqref{J}, 
\begin{equation}
  J(x)\leq J(0)-\lambda x^2+\theta |x|^4
\end{equation}
with $\lambda=\frac{1}{2d}\int \frac{p^2|v(p)|^2}{\epsilon(p)^2}\,\dd p$ and $\theta= \frac{1}{24d}\int \frac{|p|^4|v(p)|^2}{\epsilon(p)^2}\,\dd p$. These integrals are finite  by our assumptions on $v$ and $\epsilon$. Moreover the function $K_P$, defined in \eqref{K}, satisfies $K_P(x)\leq K_P(0)$ and hence
 \begin{equation}
  \mathfrak{Re}F(x)\leq F(0)-\alpha \lambda x^2 +\alpha \theta |x|^4 .
\end{equation}
Let us choose  $\varepsilon$ such that $0<\varepsilon<\lambda$, and let $\delta=\sqrt{\frac{\varepsilon}{\theta}}$.  We have that $J(x)<J(0)$ for any $x$ with $|x|>\delta$. By the Riemann--Lebesgue Lemma, $J$ is continuous and vanishes at infinity. It follows that there exists $\xi>0$ such that 
\begin{equation}\label{eta}
  J(x)\leq J(0)-\xi, \quad \forall x: |x|>\delta. 
\end{equation} Note that since $J$ is independent of $\alpha$ and $P$, so are $\delta$ and $\xi$. From \eqref{eta} and from $K_P(x)\leq K_P(0)$ we conclude that
\begin{equation}
  \mathfrak{Re}F(x)\leq F(0)-\alpha\xi \quad \forall x: |x|>\delta. 
\end{equation}
We thus obtain the upper bound
\begin{equation}
\begin{split}
  &\int |x|^r e^{-\frac{m}{4}\omega x^2}e^{\mathfrak{Re}F(x)}\dd x=\int_{|x|\leq \delta} |x|^r e^{-\frac{m}{4}\omega x^2}e^{\mathfrak{Re}F(x)}\dd x+\int_{|x|>\delta} |x|^r e^{-\frac{m}{4}\omega x^2}e^{\mathfrak{Re}F(x)}\dd x  \\ & \leq e^{F(0)}\int_{\mathbb{R}^d} |x|^r e^{-\alpha \left(\lambda-\varepsilon\right)x^2-\frac{m}{4}\omega x^2}\dd x +e^{F(0)-\alpha\xi}\int_{\mathbb{R}^d} |x|^r e^{-\frac{m}{4}\omega x^2}\dd x\\ &= e^{F(0)} \mathcal{C}_r \left(  \left(\alpha(\lambda-\varepsilon)+\frac{m\omega}{4}\right)^{-(r+d)/2}+ e^{-\alpha\xi}{\left(\frac{m\omega}{4}\right)^{-(r+d)/2}}\right)
  \end{split}
\end{equation} 
  where $\mathcal{C}_r=\int_{\mathbb{R}^d} |u|^r e^{-u^2}\dd u=2\pi^{d/2}{\Gamma(\frac{r+d}{2})}/{\Gamma(\frac{d}{2})}$. Since $\omega\sim \sqrt{\alpha}$ and $\xi>0$, the desired upper bound follows. 
  
  For a lower bound we simply use $\cos x\geq 1-\frac{1}{2}x^2$, and consequently 
  \begin{equation}
    \mathfrak{Re}F(x)\geq F(0)-\left(\alpha\lambda+\frac{P^2}{\alpha}\mu\right)x^2
  \end{equation} 
  where 
  \begin{equation}
  \mu=\frac{1}{2d(\mpek)^2}\int |p|^4 \frac{|v(p)|^2}{\epsilon(p)^4}\,\dd p. 
  \end{equation}  
  Thus we can directly bound \begin{equation}
    \int |x|^r e^{-\frac{m}{4}\omega x^2}e^{\mathfrak{Re}F(x)}\dd x \geq\frac{ e^{F(0)}\mathcal{C}_r}{\left(\alpha\lambda+\frac{P^2}{\alpha}\mu+\frac{m}{4}\omega\right)^{\frac{r+d}{2}}}.
  \end{equation} 
  Again, since $\omega\sim \sqrt{\alpha}$, and since $|P|\leq C\alpha$ by assumption, we arrive at the desired conclusion. 
\end{proof}

The lemma implies the bounds 
\begin{equation}\label{moments_estimate}
  \frac{C^{(r)}_1}{C^{(0)}_2}\alpha^{-r/2} \leq \langle |x|^r \rangle \leq \frac{C^{(r)}_2}{C^{(0)}_1}\alpha^{-r/2}. 
\end{equation}
With these preliminary computations and results at hand, we shall now estimate the various terms in \eqref{exp_values}, as well as the norm of $\phi_P$. 
\bigskip

\underline{\textit{Bound on the norm}}: Note that for all $R\in \mathbb{R}^d$ 
\begin{equation}\label{upper_bound_on_G}
  \GG(R)=N_{\alpha} \int e^{-\frac{m}{4}\omega x^2}e^{\mathfrak{Re}F(x)}\cos\left((R-P)\cdot x+A(x)\right)~\dd x\leq I
\end{equation}
with $I$ defined in \eqref{def:I}. Since $\langle \phi_P | \phi_P\rangle =\GG(P)$, $\GG(P)$ is positive. Using $\cos x\geq 1-\frac{1}{2}x^2$ again, we have
\begin{equation}\label{gpp_ri}
  \GG(P)\geq I-\mathcal{J}
\end{equation} 
where 
\begin{equation}
  \mathcal{J}:=\frac{1}{2}N_\alpha \int e^{-\frac{m}{4}\omega x^2}e^{\mathfrak{Re}F(x)}A(x)^2\,\dd x.
\end{equation}Since $|\sin x-x|\leq C|x|^3$ and  $\int \frac{|v(p)|^2}{\epsilon(p)^2}|p|^4\,\dd p$ is finite, we have \begin{equation}\label{A}
  |A(x)|\leq C_A|P||x|^3
\end{equation} for some constant $C_A>0$, independent of $P$ and $\alpha$. Therefore
\begin{equation}\label{R/I}
  \frac{\mathcal{J}}{I}\leq \frac{C^2_AP^2}{2}\langle |x|^6 \rangle \leq C \frac{P^2}{\alpha^3}
\end{equation}
by \eqref{moments_estimate}. 
Hence, if $\alpha$ is large and $|P|\lesssim \alpha$, $\mathcal{J}/I$ is small and we can conclude that 
\begin{equation}\label{102}
  \frac{1}{\langle \phi_P|\phi_P\rangle}=\frac{1}{\GG(P)}\leq \frac{1}{I}\left(1+C\frac{P^2}{\alpha^3}\right)
  \end{equation}
for suitable $C>0$.
This bound on the norm is sufficient for our purpose. 
\vspace{\baselineskip}

\textit{\underline{Bound on the field energy}}: 
Using the definitions, we can express the expected value of the field energy in our trial state as  
\begin{equation}
  \frac{\langle \phi_P|\mathbb{F}|\phi_P\rangle}{\langle \phi_P|\phi_P\rangle} =\int \epsilon(p)|\varphi(p)|^2 \frac{\GG(P-p)}{\GG(P)}~\dd p. 
\end{equation}
Using now \eqref{upper_bound_on_G}, \eqref{gpp_ri} and \eqref{R/I}, we have \begin{equation}
  \frac{\GG(P-p)}{\GG(P)}\leq 1+C\frac{P^2}{\alpha^3},
\end{equation}
 and hence 
 \begin{equation}\label{bound_on_F}
   \frac{\langle \phi_P|\mathbb{F}|\phi_P\rangle}{\langle \phi_P|\phi_P\rangle} 
   \leq \alpha\int \frac{|v(p)|^2}{\epsilon(p)}\dd p+\frac{P^2}{2\alpha\mpek}+C\frac{P^2}{\alpha^2}
\end{equation} 
for $|P|\lesssim \alpha$, where we used \eqref{epsP}. 
\vspace{\baselineskip}

\underline{\textit{Bound on the interaction energy}}:
We have 
\begin{equation}\label{vv}
  \langle \phi_P | \mathbb{V}|\phi_P \rangle = 2 \,\mathfrak{Re}\int v(p)\varphi(p) G^{(2)}_{\psi,\varphi}(P-p,P) \,\dd p. 
\end{equation}
By plugging in \eqref{rho2_ev}, we obtain \begin{equation}\label{VV_final}
  \langle \phi_P | \mathbb{V}|\phi_P \rangle =2 N_{\alpha}\iint  v(p)\varphi(p) e^{-\frac{1}{4m\omega}p^2}e^{-\frac{m\omega}{4}x^2}\cos\left(A(x)-\frac{p\cdot x}{2}\right) e^{\mathfrak{Re}F(x)}\,\dd x \dd p . 
\end{equation}
Let $\tilde{V}$ denote the above expression without the coupling between $p$ and $x$ under the cosine, i.e.,
\begin{equation}
  \tilde{V}= 2 N_{\alpha} \int v(p)\varphi(p) e^{-\frac{1}{4m\omega}p^2}\dd p \int e^{-\frac{m\omega}{4}x^2} e^{\mathfrak{Re}F(x)}\cos A(x)\,\dd x.
\end{equation} 
Using the definition of $\GG$ and plugging in our choice of $\varphi$, we obtain 
\begin{equation}\label{bound_on_Vtilde}
  \tilde{V}=-2 \GG(P)\sqrt{\alpha} \int \frac{|v(p)|^2}{\epsilon(p)}e^{-\frac{p^2}{4m\omega}}\,\dd p.  
\end{equation} 
Note that the contribution  of the $P$-dependent part of $\varphi$ vanishes here by rotation invariance. 
By $e^{-x}\geq 1-x$ and the definition of $\omega$ in \eqref{omega}, this gives 
\begin{equation}
  \frac{\sqrt{\alpha}\tilde{V}}{\langle \phi_P|\phi_P \rangle } \leq -2\alpha \int \frac{|v(p)|^2}{\epsilon(p)}\,\dd p+\frac{d\omega}{4}. 
\end{equation}
We are left with estimating the difference $|\GG(P)^{-1}(\langle \phi_P | \mathbb{V}|\phi_P \rangle -\tilde{V})|$. We apply the elementary inequality  
\begin{equation}
\begin{split}
  & \left|\cos\left(A(x)-\frac{p\cdot x}{2}\right)-\cos A(x)\right|\leq |\cos(A(x))||\cos(p\cdot x/2)-1|+|\sin A(x)||\sin(p\cdot x/2)|  \\ & \leq \frac{(p\cdot x)^2}{8}+|A(x)|\frac{|p||x|}{2}
\end{split}
\end{equation} 
where we used $|\cos z-1|=2|\sin^2z/2|\leq z^2/2$.
Recalling our choice of $\varphi$ in \eqref{choose:phi}, this gives 
\begin{equation}\label{est}
\sqrt{\alpha}  G(P)^{-1} \left(\langle \phi_P | \mathbb{V}|\phi_P \rangle -\tilde{V} \right)\leq  I_a+I_b+II_a+II_b
\end{equation}
with the following terms to estimate:
\begin{equation}\label{Ia}
\begin{split}
  I_a &=\frac{2{\alpha}N_{\alpha}}{G(P)}\iint \frac{|v(p)|^2}{\epsilon(p)} e^{-\frac{1}{4m\omega}p^2}e^{-\frac{m\omega}{4}x^2} e^{\mathfrak{Re}F(x)}\frac{(p\cdot x)^2}{8}\dd x \dd p  \\\ & \leq \frac{\alpha N_{\alpha}}{4I d }\int \frac{p^2|v(p)|^2}{\epsilon(p)} \dd p  \int  x^2  e^{-\frac{m\omega}{4}x^2} e^{\mathfrak{Re}F(x)} \dd x  \left(1+C\frac{P^2}{\alpha^3}\right) =\frac{m\omega^2}{8}\langle x^2 \rangle \left(1+C\frac{P^2}{\alpha^3}\right) 
  \end{split}
\end{equation} where we have used \eqref{102}, the rotation-invariance of $|v|^2/\epsilon$, and the definition of $\omega$ in \eqref{omega}; 
\begin{equation}
\begin{split}
  I_b&=\frac{2{}N_{\alpha}}{G(P)}\iint \frac{ |P\cdot p|~ |v(p)|^2}{ M^{\mathrm{Pek}}\epsilon(p)^2} e^{-\frac{1}{4m\omega}p^2}e^{-\frac{m\omega}{4}x^2} e^{\mathfrak{Re}F(x)}\frac{(p\cdot x)^2}{8}\dd x \dd p  \\ &\leq \frac{|P|~\langle x^2 \rangle }{4\mpek} \left(\int \frac{|p|^3|v(p)|^2}{\epsilon(p)^2}\dd p\right) \left(1+C\frac{P^2}{\alpha^3}\right)\leq C \frac{|P|}{\alpha} 
  \end{split}
\end{equation}
 by \eqref{moments_estimate};
\begin{equation}\begin{split}
  II_a&=\frac{2{\alpha}N_{\alpha}}{G(P)}\iint \frac{|v(p)|^2}{\epsilon(p)}e^{-\frac{1}{4m\omega}p^2}e^{-\frac{m\omega}{4}x^2} e^{\mathfrak{Re}F(x)}|A(x)|\frac{|p||x|}{2}\dd x \dd p  \\ & \leq C_A \alpha |P| ~ \langle |x|^4 \rangle~\left(\int \frac{|p||v(p)|^2}{\epsilon(p)}\,\dd p\right)\left(1+C\frac{P^2}{\alpha^3}\right)\leq C\frac{|P|}{\alpha} 
  \end{split}
\end{equation}
by \eqref{A} and again \eqref{moments_estimate}; finally 
\begin{equation}
\begin{split}
  II_b &= \frac{2 N_{\alpha}}{G(P)}\iint \frac{|P\cdot p| ~|v(p)|^2}{ M^{\mathrm{Pek}}\epsilon(p)^2} e^{-\frac{1}{4m\omega}p^2}e^{-\frac{m\omega}{4}x^2} e^{\mathfrak{Re}F(x)}|A(x)|\frac{|p||x|}{2} \, \dd x \,\dd p  \\ & \leq C_A \frac{P^2}{\mpek} ~\langle |x|^4 \rangle~\left(\int \frac{|v(p)|^2|p|^3}{\epsilon(p)^2}\dd p\right)\left(1+C\frac{P^2}{\alpha^3}\right)  \leq C\frac{P^2}{\alpha^2} .
  \end{split}
\end{equation}
Combining all the estimates, we conclude that  in the regime of large $\alpha$ and small $ |P|/\alpha$ we have 
\begin{equation}\label{V_final}
 \sqrt{\alpha} \frac{\langle \phi_P | \mathbb{V}|\phi_P \rangle}{\langle \phi_P|\phi_P\rangle} \leq -2\alpha\int \frac{|v(p)|^2}{\epsilon(p)}\dd p +\frac{d\omega}{4}+\frac{m\omega^2}{8}\langle x^2 \rangle + C\frac{|P|}{\alpha} .
\end{equation}
\vspace{\baselineskip}
\underline{\textit{Bound on the kinetic energy}}: 
By plugging \eqref{tau_ev} into \eqref{def:T}, we see that the first term in \eqref{exp_values} is given by 
\begin{equation}\label{kin}
\frac 1{2m}    \frac{\langle \phi_P|(P-P_f)^2|\phi_P\rangle}{\langle \phi_P|\phi_P\rangle}=\frac{d\omega}{4}-\frac{m\omega^2}{8}\frac{\langle x^2 \cos A(x)\rangle }{\langle \cos A(x) \rangle },
  \end{equation}
  where $\langle \phi_P|\phi_P\rangle = \GG(P) = I \langle \cos A(x) \rangle$ and, in particular, $0< \langle \cos A(x) \rangle \leq 1$. 
We have, by \eqref{A},
\begin{equation}
   \langle x^2 \cos A(x) \rangle \geq \langle x^2 \rangle -CP^2 \langle |x|^8 \rangle ,
 \end{equation} 
 and thus 
 \begin{equation}
   \frac{\langle x^2 \cos A(x)\rangle }{\langle \cos A(x) \rangle } \geq \langle x^2 \rangle - CP^2 \langle x^8 \rangle \geq \langle x^2 \rangle -CP^2\alpha^{-4} 
 \end{equation}
 using \eqref{moments_estimate}. 
In particular,
 \begin{equation}\label{kin_final}
\frac 1{2m}    \frac{\langle \phi_P|(P-P_f)^2|\phi_P\rangle}{\langle \phi_P|\phi_P\rangle}\leq \frac{d\omega}{4}-\frac{m\omega^2\langle x^2\rangle}{8}+C\frac{P^2}{\alpha^3}.
\end{equation}
Upon adding \eqref{kin_final}, \eqref{V_final}, and \eqref{bound_on_F}, we arrive at the claimed upper bound. \end{proof}

\bigskip
\noindent \textbf{Acknowledgments.} Financial support through the European Research Council (ERC) under the European Union's Horizon 2020 research and innovation programme grant agreement No. 694227 (R.S.) and the Maria Sk\l{}odowska-Curie grant agreement No. 665386 (K.M.)  is gratefully acknowledged.

\end{document}